\newcommand{\rr}{\mathbb R}
\theoremstyle{plain}
\newtheorem{theorem}{Theorem}
\newtheorem{lemma}[theorem]{Lemma}
\theoremstyle{definition}
\title{NP-Membership for the Boundary-Boundary Art-Gallery Problem}
\author{Jack Stade}
\date{November 2025}
\begin{document}

\maketitle

\begin{abstract}
The boundary-boundary art-gallery problem asks, given a polygon $P$ representing an art-gallery, for a minimal set of guards that can see the entire boundary of $P$ (the wall of the art gallery), where the guards must be placed on the boundary. That is, for each point on the boundary, there should be a line segment connecting it to one of the guards that is contained in $P$. We show that this art-gallery variant is in NP, even if the polygon can have holes. In order to prove this, we develop a constraint-propagation procedure for continuous constraint satisfaction problems where each constraint involves at most 2 variables.

The X-Y variant of the art-gallery problem is the one where the guards must lie in X and need to see all of Y. Each of X and Y can be either the vertices of the polygon, the boundary of the polygon, or the entire polygon, giving 9 different variants. Previously, it was known that X-vertex and vertex-Y variants are all NP-complete and that the point-point, point-boundary, and boundary-point variants are $\exists \mathbb{R}$-complete [Abrahamsen, Adamaszek, and Miltzow, JACM 2021][Stade, SoCG 2025]. However, the boundary-boundary variant was only known to lie somewhere between NP and $\exists \mathbb{R}$. 

The X-vertex and vertex-Y variants can be straightforwardly reduced to discrete set-cover instances. In contrast, we give example to show that a solution to an instance of the boundary-boundary art-gallery problem sometimes requires placing guards at irrational coordinates, so it unlikely that the problem can be easily discretized. 
\end{abstract}

\tableofcontents

\newpage
\clearpage
\pagenumbering{arabic}

\setcounter{page}{1}

\section{Introduction}

\paragraph{The art-gallery problem and variants.}

Given a plane polygon $P$, two points $x$ and $y$ in $P$ are said to be visible to each other if the line segment between $x$ and $y$ is contained $P$. The art-gallery problem \cite{ArtGalleryTextbook} is a decision problem that asks, given a polygon $P$ (called the art-gallery), whether there are $n$ points in $P$ (called the guards) such that every point in $P$ is visible to at least one of the guards.

For $X, Y\in \{\text{Vertex, Boundary, Point}\}$, the X-Y art-gallery problem \cite{XYDefinition} is a variant of the art-gallery problem where the guards are restricted to lie in X, but only need to see Y. The original variant (as described above) is called the point-point variant under this system. The variant where the guards only need to see the boundary of the polygon is called the point-boundary variant.

All of these variants are at least NP-hard \cite{NPHardness,BoundaryNPHard}, but it isn't clear that they are necessarily contained in NP. The coordinates of the guards can be arbitrary real numbers, and some instances require guards to be placed at irrational coordinates in order to be guarded optimally, even when the vertices of $P$ are given as integers (see \cite{IrrationalGuards}).

\paragraph{The complexity class $\exists\mathbb{R}$.}

The problem ETR (short for Existential Theory of the Reals) asks whether a statement of form:

\[\exists x_1, \dots, x_n\in \rr:\Phi(x_1, \dots, x_n)\]

\noindent is true, where $\Phi$ is a boolean formula involving the signs of integer-coefficient polynomials in the variables $x_1, \dots, x_n$. The complexity class $\exists \mathbb{R}$ consists of problems that can be polynomial-time reduced to ETR. It is straightforward to see that $\text{NP}\subseteq \exists\mathbb{R}$. In 1988, Canny \cite{ETRPSPACE} established that $\exists \mathbb{R}\subseteq \text{PSPACE}$. It is unknown whether either inclusion is strict. 

\paragraph{Classification of art-gallery variants.}

In 2018, Abrahamsen, Adamaszek and Miltzow \cite{ExistsRHardness} showed that the point-point and boundary-point art-gallery variants are $\exists\rr$-complete, meaning in particular that they are not contained in NP unless NP$=\exists\rr$. Stade \cite{Stade2025} later proved the point-boundary variant is also $\exists\rr$-complete. In contrast, all the vertex-Y and X-vertex variants are easily seen to be in NP. We show that the boundary-boundary variant is in NP, completing the classification of the complexities of X-Y art-gallery variants (at least up to $\text{NP}\stackrel{?}{=}\exists \rr$). This is summarized in \Cref{tab:InitialsTable}.

\begin{table}[ht]
\begin{center}
\begin{tabu}{|c|c|}
\hline
Variant&Complexity\\
\hline
Vertex-Y&NP\cite{NPHardness}\\
X-Vertex&NP\cite{NPHardness}\\
Point-Point&$\exists\rr$\cite{ExistsRHardness}\\
Boundary-Point&$\exists\rr$\cite{ExistsRHardness}\\
Point-Boundary&$\mathbf{\exists\rr}$\cite{Stade2025}\\
Boundary-Boundary&NP[{\bf this paper}]\\
\hline
\end{tabu}
\caption{Complexity of X-Y art-gallery variants}
\label{tab:InitialsTable}
\end{center}
\end{table}

\subsection{Background and related work}

\paragraph{Art galleries with contiguous guarding.}

Biniaz, Maheshwari, Merrild, Mitchell, Odak, Polishchuk, Robson, Rysgaard, Schou, Shermer, Spalding-Jamieson, Svenning, and Zheng \cite{ContiguousArtGallery} recently gave a polynomial-time algorithm for the \emph{contiguous art-gallery problem}, a variant of the point-boundary art-gallery where each guard can be responsible only for a connected section of the polygon boundary. 

\paragraph{$\exists\mathbb{R}$-completeness.}

In addition to art galleries, many other important problems are known to be $\exists\rr$-complete, notable examples including packing \cite{PackingETRHard}, drawing graphs with fixed edge lengths \cite{Schaefer2013}, realizeability of line arrangements \cite{Mnev1988,Shor1990}, and training neural networks \cite{NNERHard,ConnectedNNERHard}. A recent survey by Schaefer, Cardinal, and Miltzow and \cite{ExistsRSurvery} lists roughly 150 $\exists\rr$-complete problems.

\paragraph{Constraint satisfaction problems.}

A constraint satisfaction problem (CSP) is a decision problem that asks whether there is an assignment of some variables that satisfies a set of constraints. A 2CSP is a CSP where each constraint involves at most 2 variables. 

A \emph{continuous constraint satisfaction problem} (CCSP) is a constraint satisfaction problem where the variables take values over a continuous domain, typically $\mathbb{R}$ or a subset thereof. The constraints are typically represented by equalities and inequalities involving polynomials in the variables. A 2CCSP is a CCSP where each constraint involves at most 2 variables.

Discrete CSPs are typically in NP and are frequently NP-complete. Similarly, CCSPs are typically in $\exists \mathbb{R}$ and are frequently $\exists\mathbb{R}$-complete. Miltzow and Schmiermann \cite{CCSPClassification} showed that CCSPs typically become $\exists\rr$-hard if constraints of form $x+y=z$ can occur along with some nonlinear constraints. They conjecture that it is necessary to have a constraints that depends on $3$ or more variables in order for a CCSP to be $\exists\rr$-hard, as current techniques seem to be unable to show $\exists\rr$-hardness for any 2CCSPs.

\paragraph{2CSPs.}

It has been known since at least 1967 \cite{Binary2SAT} that boolean (variables take values in $\{0, 1\}$, i.e. 2SAT) 2CSPs can be decided in polynomial time. On the other hand, ternary (variables take values in $\{0, 1, 2\}$) 2CSPs are already NP-hard \cite{Monotone2SAT}. In 2000, Beckert, Hahnle and Manya \cite{Monotone2SAT} gave a polynomial-time algorithm for solving (discrete) 2CSPS where the variables lie in a lattice (that is, a partially ordered set where least upper bounds and greatest lower bounds exist) and the constraints are disjunctions of terms of form $x\ge c$ or $x\le c$ (the constraints are in some sense \emph{monotone} with respect to the lattice). Charatonik and Wrona \cite{MonotoneFaster} later gave a faster algorithm for these instances. 

\paragraph{Linear 2CSPs.}

Linear programming is essentially a type of CCSP. In 1983, Megiddo \cite{LinearFeasibility} gave a strongly polynomial time algorithm for determining feasibility of a linear programming instance where each constraint involves at most $2$ variables. In 2023, Dadush, Koh, Natura, Olver, and V\'{e}gh \cite{LinearOptimality} gave a strongly polynomial time algorithm for determining optimality for these instances. The best known algorithms for general linear programming run in only weakly polynomial time, in the sense that the number of arithmetic operations needed depends on the bit-complexity of the input coefficients. 

\paragraph{Constraint propagation.}

A common technique for solving CSPs is \emph{constraint propagation}: given an instance $\Gamma$, we can derive new constraints from the existing ones. A method for producing new constraints giving existing ones is called an \emph{inference rule}. If we can derive the unsatisfiable clause $\emptyset$ from the original instance using some set of inference rules, then we conclude that the instance is unsatisfiable. A derivation of $\emptyset$ is called a \emph{refutation} of the instance. A set of inference rules for a CSP is \emph{refutation complete} if every unsatisfiable instance has a refutation.

In 1960, Davis and Putman \cite{ResolutionCompleteness} proved refutation-completeness for the following inference rule on SAT formulas:

\[\frac{x\vee C\quad \neg x\vee D}{C\vee D}\]

The notation here means that we can derive the constraint below the bar from the constraints above the bar. This particular inference rule is called binary resolution. It produces a constraint $C\vee D$ given constraints $x\vee C$ and $\neg x\vee D$, where $x$ is a variable and $C$ and $D$ are (possibly empty) constraints. In general, there is no guarantee that a refutation has polynomial size or that one can be found efficiently if it exists. However, resolution leads to a polynomial-time algorithm for boolean 2SAT (see Krom 1967 \cite{Binary2SAT}).

In 2000, Beckert, Hahnle and Manya \cite{Monotone2SAT} extended binary resolution to the case where the variables take values in a finite lattice $N$, and the constraints are disjunctions of terms of form $x\le c$ or $x\ge c$, where $x$ is a variable and $c\in N$. Specifically, they show that the inference rule:

\[\frac{(x\le c)\vee C\quad (x\ge d)\vee D}{C\vee D}\quad \text{where }d>c\]

\noindent is refutation complete for these CSPs. In the case where each constraint involves at most 2 variables, they show that refutations can be found in polynomial-time (if they exist).

\paragraph{Quantifier elimination.}

A family of techniques for solving CCSPs is quantifier elimination: given some subset $S\subset \{x_1, \dots, x_n\}$, we can try to find conditions that determine whether a given assignment of the variables in $S$ extends to a satisfying assignment of the instance. The variables in $\{x_1, \dots, x_n\}\setminus S$ are said to be eliminated. If we can eliminate all the variables, then we can decide the truth value of the instance. 

In 1948, Tarski \cite{QuantifierElimination} gave the first quantifier-elimination procedure for ETR. Given a formula:

\[\exists x_n:\Phi(x_1, \dots, x_{n})\]

\noindent he shows how to compute an equivalent quantifier-free formula $\Psi(x_1, \dots, x_{n-1})$. The formula $\Psi$ represents a sort of algorithm for determining if an assignment of the variables $x_1, \dots, x_{n-1}$ can be extended to an assignment of $x_1, \dots, x_n$ that satisfies $\Phi$. The procedure can be repeated to eliminate any subset of the variables, though each elimination causes an exponential blow-up in the size of the formula. 

Even if computing it is infeasible, the existence of a quantifier elimination can sometimes be used to bound the complexity of solutions if they exist. For example, if we eliminate all but one variable $x$ using Tarski's quantifier elimination procedure for ETR, then we obtain a formula only involving polynomials in $x$, showing that if a satisfying assignment exists, then one exists where $x$ is an algebraic number. More usefully, Schaefer and Stefankovič \cite{ETRBounding} use the existence of a certain quantifier elimination to show that ETR is $\exists\rr$-complete even when the variables are restricted to the range $[-1, 1]$.

\subsection{Our results}

We show that the boundary-boundary art-gallery problem is in NP, even for polygons with holes. The proof is by a nondeterministic reduction to a particular 2CCSP problem that we call $\mathcal{M}-2\text{SAT}$, which is a 2CCSP where the constraints are given by (piecewise) fractional-linear functions. That is to say, the constraints are piecewise of form:

\[x\le \frac{ay+b}{cy+d}\]

\noindent where $x$ and $y$ are variables and $a, b, c$ and $d$ are integer constants.

\begin{restatable}{theorem}{msatreduction}\label{thm:BBreduction}
The boundary-boundary art-gallery art-gallery problem reduces in NP to $\mathcal{M}-2\text{SAT}$, even for polygons with holes
\end{restatable}

A precise definition of $\mathcal{M}-2\text{SAT}$ and a proof of \Cref{thm:BBreduction} are given in \Cref{sec:m2sat}. Most of the paper is devoted to showing that $\mathcal{M}-2\text{SAT}$ is in NP. Eventually, we will show that, if a solution exists, then there is a solution where the variables have form $p+q\sqrt{r}$, where $p$, $q$ and $r$ are rational numbers with a polynomially-bounded number of bits. Such solutions can be verified efficiently. 

In order to do this, in \Cref{sec:ctsinference}, we define a a set of inference rules for monotone 2-CCSPs. We say that a \emph{continuous formula} on variables $v_1, \dots, v_n$ is a conjunction of constraints of form $x\le f(y)$, where $f$ is an increasing bijection $\rr\rightarrow \rr$ and $x, y\in \{\pm v_1, \dots, \pm v_n\}$, and for each each $v_i$ a constraint $v_i\in \text{range}(v_i)$, where $\text{range}(v_i)$ is a compact interval.

We define three inference rules for continuous formulas. The first inference rule is function composition, which can be expressed as

\[\frac{x\le f(y) \quad y\le g(z)}{x\le f(g(z))}\]

This means that if, if we have constraints $x\le f(y)$ and $y\le g(z)$, then we can generate a new constraint $x\le f(g(z))$. 

The remaining two inference rules detect cases where a formula is unsatisfiable. The first of these can be written

\[\frac{x\le f(y)}{\emptyset}\text{ where }f(\text{max}(\text{range}(y)))<\text{min}(\text{range}(x))\]

This means that if we have a constraint $x\le f(y)$ where the largest possible value of $y$ results in $f(y)$ being less than the smallest possible value of $x$, then we can conclude that the formula is unsatisfiable. 

The final inference rule can be written

\[\frac{x\le f(-x)\quad -x\le g(x)}{\emptyset}\text{ where }\exists c: f(-c)<c\text{ and }g(c)<-c\]

This third rule is required to detect the relationship between $x$ and $-x$. If we have constraints $x\le f(-x)$ and $-x\le g(x)$ where $f(-c)<c$ and $g(c)>-c$ for some $c$, then the formula is unsatisfiable. 

The key result of \Cref{sec:ctsinference} is that these three inference rules are enough to determine the satisfiability of certain continuous formulae. Note that a constraint $x\le f(y)$ is equivalent to $f^{-1}(x)\le y$ (since $f$ is an increasing bijection). We say that a continuous formula is symmetric if, for every constraint $x\le f(y)$ that appears, the equivalent $f^{-1}(x)\le y$ also appears. 

\begin{restatable}{theorem}{resolutioncompleteness}\label{lem:ResolutionCompleteness}
Let $\Gamma$ be an unsatisfiable symmetric continuous formula. Then the empty constraint can be derived from $\Gamma$ by the continuous inference rules.
\end{restatable}

The proof of \Cref{lem:ResolutionCompleteness} uses a compactness argument to turn a continuous formula into a discrete formula. A result of Beckert, Hahnle and Manya \cite{Monotone2SAT} shows that binary resolution is refutation-complete for the discrete instance, and we show that we can simulate binary resolution using the continuous inference rules.

The nature of a compactness argument means that we don't get any useful bound on the length of refutations needed by our system. So in \Cref{sec:logdepth} we extend our inference system to allow infinite compositions and $n$-way minimum operations. In the new inference system, we prove that it is sufficient to consider refutations that in some sense have logarithmic depth. 

In \Cref{sec:algorithms}, we use these results to give two upper bounds for $\mathcal{M}-2\text{SAT}$. First, we show that $\mathcal{M}-2\text{SAT}\in \text{NP}$. Second, we show that there is a quasi-polynomial-time algorithm for $\mathcal{M}-2\text{SAT}$. 

The problem $\mathcal{M}-2\text{SAT}$ will contain a range constraint $x\in [a, b]$ for each variable. The construction in \Cref{sec:logdepth} is used to construct a sort of circuit that determines the satisfiability of the instance given the values of $a$ and $b$. If we set $a=b=c$, then this circuit determines whether $x=c$ extends to a satisfying assignment of the instance, giving a method of quantifier elimination. We can bound the complexity of arithmetic operations performed by the circuit, giving us a bound on the complexity of solutions needed. We obtain $\mathcal{M}-2\text{SAT}\in \text{NP}$. 

The quasi-polynomial-time algorithm for $\mathcal{M}-2\text{SAT}$ is essentially straightforward to obtain from the results of \Cref{sec:logdepth}. 

It it is easy to construct instances of $\mathcal{M}-2\text{SAT}$ where a solution requires irrational coordinates, but it is not immediately clear if such an instance can be realized as an instance of the boundary-boundary art-gallery problem. In \Cref{sec:irrationalexample}, we give an example to show that an optimal solution to the boundary-boundary art-gallery problem can require irrational coordinates.

\section{The boundary-boundary art-gallery problem and $\mathcal{M}-2\text{SAT}$}\label{sec:m2sat}

Let $\mathcal{F}$ be a class of continuous increasing bijections $\rr\rightarrow \rr$. Then an instance of $\mathcal{F}-2\text{SAT}$ consists of a set of variables $v_1, \dots, v_n$, a closed interval $\text{range}(v_i)$ for each variable, and a set of constraints of form $x\le f(y)$ where $x, y$ are \emph{literals}, that is elements of $\{\pm v_1, \dots, \pm v_n\}$. The problem asks whether there is an assignment of real values $v_i\in \text{range}(v_i)$ that satisfy all the constraints. We write $\text{range}(-v_i)=-\text{range}(v_i)$.

\Cref{fig:ConstraintExamples} shows some examples of the types of sets that can occur. Even though the variable $v_i$ will only ever take values in $\text{range}(v_i)$, it is helpful to have the functions defined on all of $\rr$ so that inverses and function compositions are well-defined.

\begin{figure}
\centering
\includegraphics[page=1]{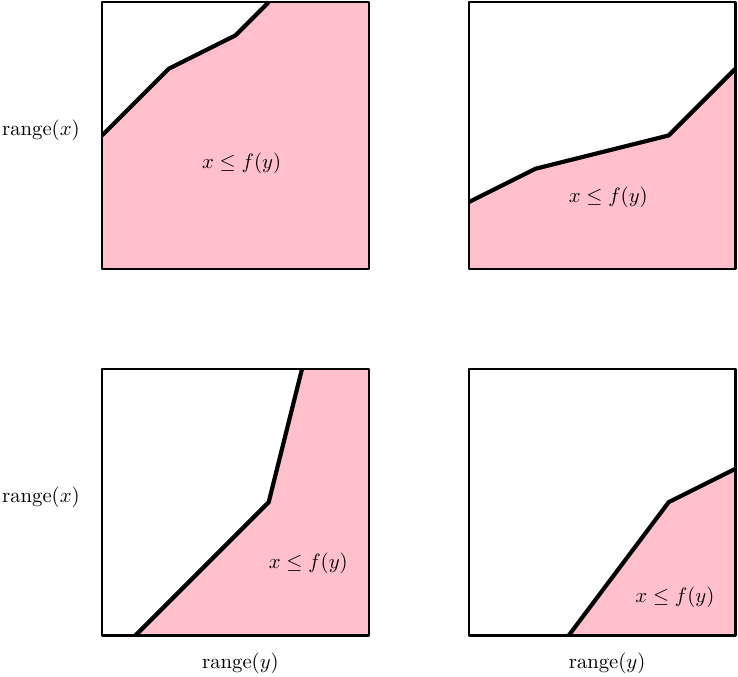}
\caption{Some examples of constraints of the form $x\le f(y)$ for different $f$. }
\label{fig:ConstraintExamples}
\end{figure}

In many cases, the constraint $x\le f(y)$ already restricts $x$ or $y$ to a smaller interval. We could try to restrict the variables to this range as a pre-processing step. However, restricting the range of one variable might then require restricting the range of another variable, and this can lead to an infinite loop. This process does converge in some sense, but it isn't clear that we can compute the limit.

If the functions in $\mathcal{F}$ are very complicated, for example polynomial circuits, then there is a little hope of being able to decide $\mathcal{F}-2\text{SAT}$ efficiently. The specific class of functions that we are most interested in are \emph{increasing continuous piecewise fractional linear}, that is (increasing, continuous) functions that are piecewise of form:

\[f(x)=\frac{ax+b}{cx+d}\]

\noindent where $a, b, c$ and $d$ are integers. We let $\mathcal{M}$ be this class of functions, so $\mathcal{M}-2\text{SAT}$ is the 2CCSP using functions from $\mathcal{M}$ to define the constraints. 

\msatreduction*


\begin{proof}
If $P$ is an art-gallery with $n$ vertices, then the boundary of $P$ can be easily guarded with $n$ (or even $\left\lceil\frac{n}2\right\rceil$) boundary guards. In a guarding configuration of $P$, the portion of the boundary that can be seen by any given guard is composed of at most $n$ connected regions. This is because any two different connected components of the boundary region must be separated by a vertex of $P$, as illustrated in \Cref{fig:BBreduction} (left). So in a guarding configuration, each edge of $P$ is covered by a set of at most $n^2$ intervals that are each guarded by a single guard. 

\begin{figure}
\centering
\includegraphics[page=1]{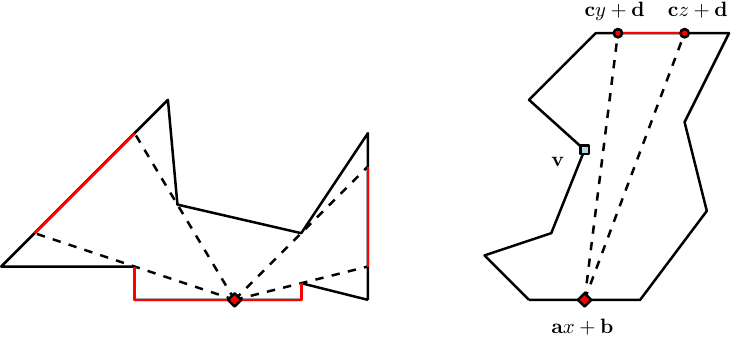}
\caption{Left: the portion of the boundary seen by a single guard is a union of at most $n$ connected pieces, since any two different pieces are separated by a vertex of the polygon. There are at most $n$ guards, so in a guarding configuration, each edge is covered by a union of at most $n^2$ intervals that are each visible to a single guard. Right: we guess that a guard at position $\mathbf{a}x+\mathbf{b}$ sees the interval from $\mathbf{c}y+\mathbf{d}$ to $\mathbf{c}z+\mathbf{d}$. In order for this to be true, the vertex $\mathbf{v}$ of $P$ must be on the left side of the $\mathbf{a}x+\mathbf{b}$ to $\mathbf{c}y+\mathbf{d}$, creating a fractional-linear constraint.}
\label{fig:BBreduction}
\end{figure}

First, we nondeterminstically guess the structure of a guarding configuration. That is, for each guard, we guess which edge of $P$ it is on, and for each edge of $P$ we guess a partition of it into at most $n^2$ intervals (called visibility intervals), and for each visibility interval we guess which guard should guard it. The exact locations of each guard and the endpoints of the guarding intervals are not guessed, but will become variables in an instance of $\mathcal{M}-2\text{SAT}$.

For each edge of $P$, choose vectors $\mathbf{a},\mathbf{b}\in \mathbb{R}^2$ such that points on that edge have form $\mathbf{a}x+\mathbf{b}$ for $x\in [0, 1]$. We use this to parameterize the positions of the guards and the endpoints of the edges.

For each visibility interval, there is a constraint that the interval is visible to the chosen guard. We need to show that these constraints can be written using the functions from $\mathcal{M}$. 

To check that an interval from a point $\mathbf{y}$ to a point $\mathbf{z}$ is visible to a guard at position $\mathbf{x}$, we need to check that the triangle $\Delta\mathbf{xyz}$ is inside the polygon. To do this, if is sufficient to check that

\begin{itemize}
    \item each segment of the polygon boundary is interior-disjoint from $\Delta\mathbf{xyz}$, and
    \item the points $\mathbf{y}$ to a point $\mathbf{z}$ are on the inside-side of the segment containing $\mathbf{x}$
\end{itemize}

The triangle $\Delta\mathbf{xyz}$ is convex and each line segment on the boundary is convex, so if they are separated then there is a line separating them. The line containing the boundary segment or one of the lines containing a side of $\Delta\mathbf{xyz}$ is sufficient. For each segment, we nondeterministically guess which of these $4$ lines to use.

So there are two types of conditions that we might need to check:

\begin{itemize}
    \item check that a guard point or visibility interval endpoint is on the inside-side or outside-side of the line containing a polygon edge, or
    \item check that a polygon vertex is on the outside-side of a line from a guard vertex to an interval endpoint
\end{itemize}

The first type of constraint only depends on the position of one guard vertex or interval endpoint, say $\mathbf{a}x+\mathbf{b}$, and can be written $x\le c$ or $x\ge c$ for some rational constant $c$. These constraints can be implemented by adjusting the range of the variable $x$. 

The second type of constraint is more complicated, and is illustrated in \Cref{fig:BBreduction} (right). In order for the guard at position $\mathbf{u}=\mathbf{a}x+\mathbf{b}$ to see the line segment from $\mathbf{w}=\mathbf{c}y+\mathbf{d}$ to $\mathbf{c}z+\mathbf{d}$, we need to know that the segments from $\mathbf{a}x+\mathbf{b}$ to $\mathbf{c}y+\mathbf{d}$ or $\mathbf{c}z+\mathbf{d}$ are contained in the polygon $P$. 

The vertex marked $\mathbf{v}$ in the figure should be on the left side of the line from $\mathbf{a}x+\mathbf{b}$ to $\mathbf{c}y+\mathbf{d}$. This condition can be written

\begin{equation}\label{eq:visibilitycondition}
(\mathbf{a}x+\mathbf{b}-\mathbf{v})\times (\mathbf{c}y+\mathbf{d}-\mathbf{v}) \ge 0
\end{equation}

Here $\times$ represents cross product. The cross product of two vectors $\mathbf{p}$ and $\mathbf{q}$ in $\mathbb{R}^2$ is a scalar value given by the determinant of the matrix with columns $\mathbf{p}$ and $\mathbf{q}$. The condition \eqref{eq:visibilitycondition} is equivalent to

\[\mathbf{a}\times (\mathbf{c}y+\mathbf{d}-\mathbf{v})x \ge -(\mathbf{b}-\mathbf{v})\times (\mathbf{c}y+\mathbf{d}-\mathbf{v})\]

Let $a=-(\mathbf{b}-\mathbf{v})\times \mathbf{c}, b=-(\mathbf{b}-\mathbf{v})\times (\mathbf{d}-\mathbf{v}), c=\mathbf{a}\times \mathbf{c}$ and $d=\mathbf{a}\times (\mathbf{d}-\mathbf{v})$. Depending on the sign of $\mathbf{a}\times (\mathbf{c}y+\mathbf{d}-\mathbf{v})$ (that is, $cy+d$), we want to create the constraint either $x\ge \frac{ay+b}{cy+d}$ or $x\le \frac{ay+b}{cy+d}$. In the example in \Cref{fig:BBreduction} (right), $cy+d$ is always positive, but in general it could have different signs for different values of $y$.

We define a function $f$ that is equal to $\frac{ay+b}{cy+d}$ when $cy+d> 0$ and $0\le ay+b\le cy+d$ (so $\frac{ay+b}{cy+d}\in [0, 1]$ when $y$ satisfies these conditions). If there is a point $cy+d=0$ and $ay+b=0$, then $ay+b$ is divisible (as a polynomial in $y$) by $cy+d$, so $f$ is constant, and we can extend $f$ so that it is defined at the ``bad'' point. 

The conditions $cy+d\ge 0$ and $0\le ay+b\le cy+d$ are convex, so the set of values of $y$ satisfying them is an interval if it is non-empty. If this interval is nonempty, then extend $f$ to a monotone function $\mathbb{R}\rightarrow \mathbb{R}$ by attaching lines of slope $\pm 1$ outside of the interval, and add the constraint $x\ge f(y)$. 

Similarly, let $g$ be a monotone function $\mathbb{R}\rightarrow \mathbb{R}$ that is equal to $\frac{ay+b}{cy+d}$ when $cy+d> 0$ and $cy+d\le ay+b\le 0$ (assuming that such values of $y$ exist), and add the constraint $x\le g(y)$. The constraints $x\ge f(y)$ and $x\le g(y)$ are equivalent to \eqref{eq:visibilitycondition}. 

We want the constraints to have form $x\le f(y)$ for an increasing function $f$. If $ad-bc$ is positive, then $f$ and $g$ are increasing, and we can write $x\ge f(y)$ and $-x\le \widetilde{f}(-y)$, where $\widetilde{f}(z)=-f(-z)$. If $ad-bc$ is negative, then $f$ and $g$ are decreasing, and we can write $x\ge f(y)$ as $-x\le -f(y)$ and $x\le g(y)$ and $x\le \hat{g}(-y)$ where $\hat{g}(z)=g(-z)$. If $ad-bc=0$, then $\frac{ay+b}{cy+d}$ doesn't depend on $y$, and we adjust the range of $x$ appropriately. 

All of the necessary constraints can be written in a similar way, and they will all be piecewise-fractional-linear. We obtain an instance of $\mathcal{M}-2\text{SAT}$.
\end{proof}

So if we can show that $\mathcal{M}-2\text{SAT}$ is in NP, then we will show that the boundary-boundary art-gallery problem is in NP (the certificate consists of values of the guesses needed in the proof of \Cref{thm:BBreduction} and a certificate for the $\mathcal{M}-2\text{SAT}$ instance generated). Most of the rest of this paper will be devoted to proving that $\mathcal{M}-2\text{SAT}$ is in NP. This is not immediately clear, since irrational values are sometimes required. For example, the instance:

\[x\le \frac{x+p}{x+1}, x\le \frac{x-p}{-x+1}\]

\noindent is satisfiable for $p$ positive, but for $0<p<1$ the only solutions are $x=\pm \sqrt{p}$, which can be irrational. What we will show is that if an instance is satisfiable, then there is a solution where each variable is of form $p+q\sqrt{r}$, where $p$, $q$ and $r$ are rational numbers with at most polynomially many bits.

\section{The continuous inference system}\label{sec:ctsinference}

In this section, we construct an inference system for $\mathcal{F}-2\text{SAT}$ instances and prove that it is refutation complete.

Throughout this section, we work with $n$ real variables $v_1, \dots, v_n$, where $v_i$ is restricted to a compact interval $\text{range}(v_i)\subseteq \mathbb{R}$. In this section, a \emph{formula} is a conjunction \emph{clauses} that are each a conjunction of at most $2$ terms of form $v_i\le c$ or $v_i\ge c$ for some $c\in \mathbb{R}$. The empty clause $\emptyset$ is unsatisfiable. A \emph{continuous formula} is a conjunction of finitely many \emph{constraints}, where a constraint is of form $x\le f(y)$ for some literals $x$ and $y$, where $f$ is a continuous strictly increasing bijection $\mathbb{R}\rightarrow \mathbb{R}$.

\begin{theorem}\label{thm:LatticeFinite}(Beckert, Hahnle and Manya, \cite{Monotone2SAT})
Let $N$ be a lattice (that is, a partially ordered set where every pair of elements has a unique least upper bound and greatest lower bound), let $v_1,\dots, v_n$ be variables taking values in $N$, and let $\Gamma$ be a conjunction of (finitely many) clauses each containing literals of form $x_i\le c$ or $x_i\ge c$ for $c\in N$. If $\Gamma$ is unsatisfiable, then the empty clause can be derived from $\Gamma$ using the binary resolution rule:

\[\frac{(x_1\le a)\vee T_1\quad (x_1\ge b)\vee T_2}{T_1\vee T_2}\text{ where }a<b\]
\end{theorem}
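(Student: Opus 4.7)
My plan is to prove refutation completeness by induction on the number of variables $n$, using variable elimination via binary resolution. The base case $n = 0$ is trivial: the only clause on zero variables is $\emptyset$, so $\Gamma$ is unsatisfiable iff $\emptyset \in \Gamma$, and the empty derivation suffices.

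For the inductive step, fix $v_n$ and let $\Gamma'$ consist of every clause of $\Gamma$ containing no $v_n$-literal, together with every binary resolvent $T_1 \vee T_2$ obtained from clauses $(v_n \le a) \vee T_1$ and $(v_n \ge b) \vee T_2$ in $\Gamma$ with $a < b$. Each clause of $\Gamma'$ is either in $\Gamma$ or obtainable by a single application of binary resolution, so any refutation of $\Gamma'$ lifts to one of $\Gamma$. By the inductive hypothesis applied to $\Gamma'$ (which involves only $v_1, \ldots, v_{n-1}$), it suffices to show that $\Gamma$ is unsatisfiable if and only if $\Gamma'$ is.

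The forward direction is immediate, since each clause of $\Gamma'$ is a semantic consequence of $\Gamma$. For the converse, let $\sigma$ satisfy $\Gamma'$ on $v_1, \ldots, v_{n-1}$; I extend $\sigma$ to $v_n$. Call a clause of $\Gamma$ with $v_n$-literals \emph{active} if $\sigma$ falsifies every non-$v_n$ literal in it; each active clause contributes either a unit constraint $v_n \le a_j$ or $v_n \ge b_k$ (or, when the clause contains two $v_n$-literals, a disjunctive constraint). Whenever an active upper bound $a_j$ and an active lower bound $b_k$ would be jointly unsatisfiable, i.e., $a_j < b_k$, the resolvent $T_j \vee T_k$ lies in $\Gamma'$ and is falsified by $\sigma$, contradicting the choice of $\sigma$. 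Hence $b_k \le a_j$ for every active pair, so the lattice structure of $N$ guarantees that $\bigvee_k b_k \le \bigwedge_j a_j$, and any value in between extends $\sigma$ to a satisfying assignment of $\Gamma$.

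The main obstacle I expect is handling clauses with two $v_n$-literals, e.g.\ $(v_n \le a) \vee (v_n \ge b) \vee T$ with $a < b$, whose active contribution is a union of two intervals rather than a single one. Self-resolution of such a clause yields nothing new, so the extension argument must do a case analysis on which branch of each disjunctive clause is satisfied, using the pairwise compatibility guaranteed by resolvents in $\Gamma'$. The lattice hypothesis (especially the existence of finite meets and joins in $N$) is used essentially both to take $\bigvee_k b_k$ and $\bigwedge_j a_j$ and to argue that the resulting system of upper-bound, lower-bound, and two-sided disjunctive constraints is globally satisfiable once it is pairwise satisfiable.
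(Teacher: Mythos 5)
The paper does not give its own proof of \Cref{thm:LatticeFinite}; it is cited verbatim from \cite{Monotone2SAT}, and the surrounding discussion only remarks that the proof there (stated for finite $N$) extends to infinite $N$ with finitely many clauses. So there is no ``paper proof'' to compare against, only the question of whether your argument is itself correct, and there is a genuine gap.

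The inductive step asserts that $\Gamma'$ --- the $v_n$-free clauses of $\Gamma$ together with the one-round binary resolvents on $v_n$ --- ``involves only $v_1,\dots,v_{n-1}$.'' This is false as soon as a clause of $\Gamma$ contains two or more $v_n$-literals, which the theorem permits: resolving $(v_n\le a)\vee(v_n\ge b)\vee T_1$ against $(v_n\ge b')\vee T_2$ with $a<b'$ produces $(v_n\ge b)\vee T_1\vee T_2$, which still mentions $v_n$. So the induction hypothesis cannot be applied to $\Gamma'$. Separately, the closing claim that ``the resulting system \dots is globally satisfiable once it is pairwise satisfiable'' is a Helly-type property that fails for the disjunctive constraints you flag as the obstacle. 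Over $N=\mathbb{R}$, take the one-variable clauses $(v\le 1)\vee(v\ge 5)$, $(v\ge 3)$, and $(v\le 4)$: every pair is jointly satisfiable but the three together are not, so no single active pair produces a falsified resolvent in $\Gamma'$. Resolution does refute this instance, but only via a $v_n$-resolvent $(v\ge 5)$ that still contains $v_n$ and must itself be resolved again --- exactly what a single-shot $\Gamma'$ cannot provide. A correct variable-elimination argument must saturate resolution on $v_n$ (not apply one round) and then discard the remaining $v_n$-clauses, and the extension step needs an explicit analysis of which arm of each disjunctive active clause is chosen rather than an appeal to pairwise compatibility.

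A smaller point: the inference ``$a_j\not<b_k$ hence $b_k\le a_j$'' uses trichotomy, which holds only when $N$ is a chain; in a general lattice $a_j$ and $b_k$ may be incomparable, in which case $(v_n\le a_j)\wedge(v_n\ge b_k)$ is unsatisfiable yet the rule as stated (with side condition $a<b$) produces no resolvent. The paper's statement of the side condition has the same imprecision, but it is harmless there because the result is only applied with $N\subset\mathbb{R}$; your proof, however, leans on the trichotomy explicitly, so the hypothesis needs to be narrowed to a chain (or the side condition changed to $b\not\le a$).
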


The authors of \cite{Monotone2SAT} restricted to the case where $N$ is finite, but the proof works for $N$ infinite as long as the number of clauses is finite (we can just restrict to the finite sublattice generated by the elements appearing in the $\Gamma$). In the case where $N\subseteq \rr$ is a compact interval, we now show how to use a compactness argument to extend to the case where the number of clauses is infinite.

\begin{lemma}\label{lem:Compactness}
For real variables $v_1, \dots, v_n$, let $\text{range}(v_i)\subseteq \rr$ be a compact interval. Suppose $\Xi$ is an unsatisfiable formula consisting of a (possibly infinite) conjunction of disjunctions of terms $v_i\le c$ or $v_i\ge c$ with $c\in \text{range}(v_i)$. Then $\emptyset$ can be derived from $\Xi$ with (finitely many) applications of binary resolution.
\end{lemma}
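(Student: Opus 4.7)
The plan is a two-step compactness argument: first reduce the possibly infinite conjunction $\Xi$ to a finite subfamily by a topological compactness argument on the product of ranges, and then invoke the extension of \Cref{thm:LatticeFinite} to finite conjunctions with constants in $\mathbb{R}$ that the text has already noted.

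For the first step, I would consider the compact product space $K = \prod_{i=1}^n \text{range}(v_i)$, which is compact as a finite product of closed bounded real intervals. For each clause $C$ of $\Xi$, let $S_C \subseteq K$ be the set of assignments in $K$ that satisfy $C$. Each atomic term $v_i \le c$ or $v_i \ge c$ cuts out a closed half-space, and $C$ is a disjunction of at most two such terms, so $S_C$ is a union of (at most two) closed sets and hence closed. Unsatisfiability of $\Xi$ amounts to $\bigcap_{C \in \Xi} S_C = \emptyset$, which says that $\{K \setminus S_C : C \in \Xi\}$ is an open cover of $K$. Extracting a finite subcover yields a finite subfamily $\Xi' \subseteq \Xi$ that is already unsatisfiable.

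For the second step, let $N \subseteq \mathbb{R}$ be the finite set consisting of all constants appearing in $\Xi'$ together with the endpoints of each $\text{range}(v_i)$. Totally ordered by the inherited $\le$, $N$ is a finite lattice. Restricting each $v_i$ to take values in $N$ only removes potential satisfying assignments, so the unsatisfiability of $\Xi'$ over $\mathbb{R}$ transfers to unsatisfiability over $N$. \Cref{thm:LatticeFinite} then produces a finite binary-resolution derivation of $\emptyset$ from $\Xi'$. Because binary resolution never introduces new constants, every intermediate clause has constants from $N \subseteq \mathbb{R}$, so the derivation works verbatim over the original real-valued ranges, giving the desired refutation of $\Xi$.

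The only subtle point I anticipate is verifying that each $S_C$ is genuinely closed in $K$, so that compactness supplies the finite subcover; everything else is bookkeeping. In particular, the fact that refutation length is finite comes automatically from applying \Cref{thm:LatticeFinite} to the finite instance $\Xi'$, with no effective bound visible from this argument, which is consistent with the remark in the introduction that the compactness approach yields no useful bound on proof length.
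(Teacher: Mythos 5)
Your argument is correct and matches the paper's proof: both use Tychonoff compactness of $\prod_i \text{range}(v_i)$, observe that the set of points satisfying each clause is closed, extract a finite unsatisfiable subformula $\Xi'$ via the finite-subcover property, and then invoke \Cref{thm:LatticeFinite}. The only difference is that you spell out explicitly the passage to a finite sublattice $N$ and why unsatisfiability and the derivation transfer, a point the paper delegates to the remark preceding the lemma (``we can just restrict to the finite sublattice generated by the elements appearing in $\Gamma$'').
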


\begin{proof}
Let $D\subset \rr^n$ be the set of values $(v_1,\dots, v_n)$ with each $v_i$ in $\text{range}(v_i)$. Since each $\text{range}(v_i)$ is compact, $D$ is a product of compact spaces, so is compact.

For each clause $C$ in $\Xi$, the set of points $S(C)$ in $D$ satisfying $C$ is closed. Since $\Xi$ is unsatisfiable, the intersection:

\[\bigcap\{S(C): C\text{ a clause in }\Xi\}\]

\noindent is empty. So:

\[\bigcup\{D\setminus S(C) : C\text{ a clause in }\Xi\}=D\setminus\bigcap\{S(C): C\text{ a clause in }\Xi\}=D\]

Since each $D\setminus S(C)$ is open and $D$ is compact, there is a finite set of the $D\setminus S(C)$ that cover $D$, and so the intersection of these $S(C)$ is empty. That is, there is a finite subformula $\Xi'$ of $\Xi$ that is not satisfiable.

By \Cref{thm:LatticeFinite}, binary resolution can be used to derive $\emptyset$ from $\Xi'$. Since $\Xi'$ is a subformula of $\Xi$, this gives a (finite) derivation of $\emptyset$ from $\Xi$.
\end{proof}

We now write $x\le f(y)$ in terms of infinitely many clauses involving terms of form $x\le c$ or $y\ge d$.

\begin{lemma}\label{lem:FunctionClauses}
Let $f:\rr\rightarrow\rr$ be a continuous increasing bijection. Values $x$ and $y$ satisfy $x\le f(y)$ if and only if they satisfy $(x\le f(c))\vee (y\ge c)$ for each $c\in \rr$.
\end{lemma}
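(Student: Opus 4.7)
The plan is to prove each direction by a straightforward monotonicity argument. Both rely crucially on $f$ being a continuous increasing bijection, so both directions are essentially one-line observations.

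For the forward direction, I would assume $x\le f(y)$ and fix an arbitrary $c\in\rr$. I would split into cases: if $y\ge c$, then the second disjunct holds and we are done; otherwise $y<c$, and monotonicity of $f$ gives $f(y)<f(c)$, so $x\le f(y)<f(c)$ and the first disjunct holds.

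For the backward direction, I would proceed by contrapositive: assuming $x>f(y)$, I exhibit a single $c$ for which both disjuncts fail. Since $f$ is a continuous bijection $\rr\to\rr$, it is surjective onto $\rr$, so I can pick $c\in\rr$ with $f(c)$ any value strictly between $f(y)$ and $x$, say $f(c)=\tfrac12(f(y)+x)$. Then $f(c)<x$ immediately shows $\neg(x\le f(c))$, and $f(c)>f(y)$ combined with the fact that $f$ is strictly increasing (hence injective) forces $c>y$, so $\neg(y\ge c)$. This contradicts the hypothesis that $(x\le f(c))\vee(y\ge c)$ holds for every $c$.

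There is no real obstacle here; the lemma is essentially the observation that a strictly increasing continuous bijection identifies the graph of $x\le f(y)$ with the intersection of the infinitely many half-planes cut out by the constraints at each $c$. The only subtlety worth noting in the write-up is that both properties of $f$ are used: surjectivity to produce the separating $c$, and monotonicity both to translate $f(c)<x$ into failure of one disjunct and to translate $f(c)>f(y)$ into $c>y$.
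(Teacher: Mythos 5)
Your proposal is correct and takes essentially the same approach as the paper: the easy direction is the identical monotonicity case-split on whether $y\ge c$, and the hard direction is by contrapositive, producing a witnessing $c$ strictly between $y$ and $f^{-1}(x)$. The only cosmetic difference is that the paper invokes continuity to get a $c>y$ with $f(c)<x$, while you invoke surjectivity to hit the midpoint $\tfrac12(f(y)+x)$; both are available from the hypothesis that $f$ is a continuous increasing bijection, so this is just a matter of taste.
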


\begin{proof}
First, suppose $x>f(y)$. By continuity of of $f$, there is some $c>y$ such that $x>f(c)$. So $x$ and $y$ do not satisfy $(x\le f(c))\vee (y\ge c)$.

Now suppose $x\le f(y)$ and let $c\in \rr$. If $y<c$, then $f(y)<f(c)$, so $x\le f(c)$. So at least one of $(x\le f(c))$ or $(y\ge c)$ is satisfied.
\end{proof}

We next want to define an infinite set of clauses $C(x, y, f)$ associated to the constraint $x\le f(y)$. In order to eventually use \Cref{thm:LatticeFinite}, we don't want to have a clause of form $x\le c$ or $x\ge c$ unless $c\in \text{range}(x)$. So we take all the clauses $(x\le f(c))\vee (y\ge c)$ whenever $f(c)$ is in $\text{range}(x)$ and $c$ is in $\text{range}(y)$. So $C(x, y, f)$ contains the clauses:

\[\{(x\le f(c))\vee (y\ge c) : c\in \text{range}(y)\cap f^{-1}(\text{range}(x))\}\]

We also need some additional clauses corresponding to the endpoints of of $\text{range}(x)$ and $\text{range}(y)$. If $f(\text{max}(\text{range}(y)))\in \text{range}(x)$, then we include $x\le f(\text{max}(\text{range}(y)))$. If $ f^{-1}(\text{min}(\text{range}(x)))\in \text{range}(y)$, then we include $y\ge f^{-1}(\text{min}(\text{range}(x)))$. 

If $f(\text{max}(\text{range}(y)))<\text{min}(\text{range}(x))$ or $f^{-1}(\text{min}(\text{range}(x)))>\text{max}(\text{range}(y))$, then we add the empty clause $\emptyset$ to $C(x, y, f)$, since the constraint $x\le f(y)$ is unsatisfiable for $x\in \text{range}(x)$ and $y\in \text{range}(y)$. 

\begin{lemma}\label{lem:ConversionCorrectness}
Values $x$ and $y$ with $x\in \text{range}(x)$ and $y\in \text{range}(y)$ satisfy $x\le f(y)$ if and only if they satisfy all the constraints in $C(x, y, f)$.
\end{lemma}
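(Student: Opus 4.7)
The plan is to handle both directions of the equivalence by a careful case analysis that pairs each boundary situation with the endpoint clause of $C(x,y,f)$ designed to rule it out. To streamline the exposition I would abbreviate $y_{\max}=\max\text{range}(y)$, $y_{\min}=\min\text{range}(y)$, $x_{\max}=\max\text{range}(x)$ and $x_{\min}=\min\text{range}(x)$.

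For the forward direction, I would assume $x\le f(y)$ with $(x,y)\in\text{range}(x)\times\text{range}(y)$ and verify each clause of $C(x,y,f)$ individually. The interior clauses $(x\le f(c))\vee(y\ge c)$ are handled by \Cref{lem:FunctionClauses}. The endpoint clause $x\le f(y_{\max})$, when it is in $C(x,y,f)$, follows from $y\le y_{\max}$ and monotonicity of $f$; the endpoint clause $y\ge f^{-1}(x_{\min})$, when included, follows from $x\ge x_{\min}$ together with $f^{-1}(x)\le y$. The empty clause cannot lie in $C(x,y,f)$ under this assumption: the triggering condition $f(y_{\max})<x_{\min}$ would give $x\ge x_{\min}>f(y_{\max})\ge f(y)$ for every in-range $(x,y)$, contradicting $x\le f(y)$, and the other triggering condition is equivalent.

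For the backward direction, I would suppose all clauses of $C(x,y,f)$ are satisfied by $(x,y)$ and derive a contradiction from $x>f(y)$. Since $\emptyset\notin C(x,y,f)$, one has $f(y_{\max})\ge x_{\min}$, equivalently $f^{-1}(x_{\min})\le y_{\max}$. If $x>f(y_{\max})$, then $f(y_{\max})\in[x_{\min},x_{\max}]$, so the clause $x\le f(y_{\max})$ is in $C(x,y,f)$ and is violated; hence $x\le f(y_{\max})$, which with $x>f(y)$ forces $y<y_{\max}$. Next, if $y<f^{-1}(x_{\min})$, then $f^{-1}(x_{\min})\in(y_{\min},y_{\max}]\subseteq\text{range}(y)$, so the clause $y\ge f^{-1}(x_{\min})$ is in $C(x,y,f)$ and is violated; hence $f(y)\ge x_{\min}$. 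Having established $y<y_{\max}$ and $f(y)\ge x_{\min}$, I pick any $c\in(y,\,\min(y_{\max},f^{-1}(x)))$; then $c\in\text{range}(y)$ since $y_{\min}\le y<c<y_{\max}$, and $f(c)\in\text{range}(x)$ since $x_{\min}\le f(y)<f(c)<x\le x_{\max}$. The interior clause $(x\le f(c))\vee(y\ge c)$ thus lies in $C(x,y,f)$, yet $x>f(c)$ and $y<c$ leave it violated, the desired contradiction.

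The main obstacle is precisely this last step: \Cref{lem:FunctionClauses} produces a witness $c\in\rr$, but the definition of $C(x,y,f)$ only admits clauses indexed by $c\in\text{range}(y)\cap f^{-1}(\text{range}(x))$. The endpoint clauses of $C(x,y,f)$ are engineered exactly to rule out the degenerate boundary configurations where a naive witness would escape the allowed range, and the technical content of the proof is matching each such configuration to the endpoint clause that kills it.
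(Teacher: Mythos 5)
Your proof is correct. The forward direction matches the paper's, but the backward direction takes a genuinely different route. The paper starts by applying \Cref{lem:FunctionClauses} to obtain a witness $c\in\rr$ with $(x\le f(c))\vee(y\ge c)$ violated, then splits on whether $c$ lands outside $\text{range}(y)\cap f^{-1}(\text{range}(x))$ and shows that an endpoint clause (or, implicitly, the empty clause) is then violated. You instead first mine the endpoint and empty clauses to establish $x\le f(y_{\max})$ and $f(y)\ge x_{\min}$, which gives you room to directly pick an in-range interior witness $c\in(y,\min(y_{\max},f^{-1}(x)))$, bypassing \Cref{lem:FunctionClauses} in that direction entirely. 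Your version makes explicit why the chosen witness stays inside the index set (something the paper leaves largely implicit) and never needs to worry about $c$ landing outside both ranges simultaneously; the paper's version is shorter since it re-uses the existing lemma. Both are valid; the content is the same, namely that the endpoint clauses exactly plug the gaps that a naive out-of-range witness would exploit.
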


\begin{proof}
First suppose $x$ and $y$ satisfy $x\le f(y)$. By \Cref{lem:FunctionClauses}, all the $(x\le f(c))\vee (y\ge c)$ clauses are satisfied. Since $y\in \text{range}(y)$, we have $x\le f(\text{max}(\text{range}(y)))$. Since $x\in \text{range}(x)$, we have $f(y)\ge \text{min}(\text{range}(x))$, so $y\ge f^{-1}(\text{min}(\text{range}(x)))$.

Now suppose that $x$ and $y$ are such that $x>f(y)$. By \Cref{lem:FunctionClauses}, $(x\le f(c))\vee (y\ge c)$ fails for some $c\in \rr$. If this value of $c$ is in $\text{range}(y)\cap f^{-1}(\text{range}(x))$, then we are done. 

Suppose $(x\le f(c))\vee (y\ge c)$ fails for some $c\notin \text{range}(y)$. Since $y<c$, we have $c>\text{max}(\text{range}(y))$. Since $x>f(c)$, $x>f(\text{max}(\text{range}(y)))$, which fails to satisfy $x\le f(\text{max}(\text{range}(y)))$.

Similarly, if $(x\le f(c))\vee (y\ge c)$ fails for some $f(c)\notin \text{range}(x)$, then $f(c)<\text{min}(\text{range}(x))$ so $y<f^{-1}(\text{min}(\text{range}(x)))$.
\end{proof}

If $\Gamma$ is a continuous formula, then we define the associated formula $\Xi$ to be the formula consisting of the conjunction of all the clauses in $C(x, y, f)$ for each $x, y, f$ such that $x\le f(y)$ is a constraint in $\Gamma$.

\subsection{Inference with function composition}

We define three \emph{continuous inference rules}. These are:

\[\frac{x\le f(y) \quad y\le g(z)}{x\le f(g(z))}\]
\[\frac{x\le f(y)}{\emptyset}\text{ where }f(\text{max}(\text{range}(y)))<\text{min}(\text{range}(x))\]
\[\frac{x\le f(-x)\quad -x\le g(x)}{\emptyset}\text{ where }\exists c: f(-c)<c\text{ and }g(c)<-c\]

\begin{lemma}\label{lem:InferenceValidity}
Let $\Gamma$ be a continuous formula and let $\Gamma'$ be a continuous formula that can be obtained from $\Gamma$ by one of the continuous inference rules. Then $\Gamma$ is satisfiable if and only if $\Gamma'$ is.
\end{lemma}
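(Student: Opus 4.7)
The plan is to verify each of the three continuous inference rules independently. I interpret $\Gamma'$ as $\Gamma$ together with the new constraint produced by the rule, so the direction ``$\Gamma'$ satisfiable $\Rightarrow$ $\Gamma$ satisfiable'' is immediate from $\Gamma\subseteq\Gamma'$. For the forward direction I will show that any assignment satisfying the premises (and respecting the ranges of the variables) also satisfies the conclusion; when the conclusion is $\emptyset$, this just means showing that no such assignment exists, so $\Gamma$ is already unsatisfiable. Rules 1 and 2 are then essentially syntactic: for Rule 1, monotonicity of $f$ applied to $y\le g(z)$ gives $f(y)\le f(g(z))$ and hence $x\le f(g(z))$; for Rule 2, any assignment respecting the ranges satisfies $f(y)\le f(\max\text{range}(y))<\min\text{range}(x)\le x$, contradicting $x\le f(y)$.

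The interesting case is Rule 3. Suppose for contradiction a satisfying assignment gives $x$ the value $x_0$, and let $c$ be the witness from the side condition. The plan is to encode the premises and the side condition via the auxiliary functions $h(t) := f(-t) - t$ and $k(t) := g(t) + t$. Since $f$ is strictly increasing, $h$ is strictly decreasing; since $g$ is strictly increasing, $k$ is strictly increasing. The premises $x_0 \le f(-x_0)$ and $-x_0 \le g(x_0)$ become $h(x_0) \ge 0$ and $k(x_0) \ge 0$, while the side condition $f(-c) < c$ and $g(c) < -c$ becomes $h(c) < 0$ and $k(c) < 0$. Strict monotonicity then forces both $x_0 < c$ (from $h$) and $x_0 > c$ (from $k$), a contradiction.

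The main obstacle is Rule 3, though once the auxiliary functions $h$ and $k$ are introduced the argument is only a few lines. The geometric content is that the two premises jointly cut out a (closed) region of $\mathbb{R}$ where $x$ could potentially live, and the side condition certifies that this region is empty by exhibiting a single point $c$ at which both bounds are simultaneously violated from opposite directions; strict monotonicity is exactly the ingredient needed to propagate such a local witness of emptiness into a global contradiction, without needing any continuity or intermediate-value argument beyond the monotonicity already assumed for $f$ and $g$.
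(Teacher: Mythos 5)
Your proposal is correct and takes essentially the same approach as the paper. The handling of rules 1 and 2 matches the paper's one-line arguments, and for rule 3 the paper likewise observes that $f(-x)-x$ and $g(x)+x$ are monotone (increasing in $-x$ and in $x$ respectively) and compares the assignment value to the witness $c$ to derive the contradiction; your auxiliary functions $h$ and $k$ just give this a slightly more explicit notational frame.
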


\begin{proof}
$\Gamma'$ contains all the clauses in $\Gamma$, so $\Gamma'$ is unsatisfiable if $\Gamma$ is.

If $\Gamma$ is satisfiable, then there is some assignment of the variables satisfying all the clauses. We want to show that the same variable assignment satisfies $\Gamma'$. If $x, y$ and $z$ satisfy $x\le f(y)$ and $y\le g(z)$, then by monotonicity of $f$, $f(y)\le f(g(z))$, so $x\le f(g(z))$. So the first inference rule is valid.

If $x\le f(y)$ and $y\in \text{range}(y)$, then $y\le \text{max}(\text{range}(y))$ so $x\le f(\text{max}(\text{range}(y)))$. So if $f(\text{max}(\text{range}(y)))<\text{min}(\text{range}(x))$ then $x$ is not in $\text{range}(x)$. So if $\Gamma$ is satisfiable then the second inference rule can never be applied.

If $x\le f(-x)$ and $-x\le g(x)$ and there is some $c$ such that $f(-c)<c$ and $g(c)<-c$, then $0\le f(-x)-x$ and $0\le g(x)+x$. The function $f(-x)-x$ is increasing in $-x$ and $g(x)+x$ is increasing in $x$, so $-x>-c$ and $x>c$, which is a contradiction. This case is shown in \Cref{fig:ThirdRule}. So if $\Gamma$ is satisfiable then the third inference rule can never be applied.

\begin{figure}
\centering
\includegraphics[page=2]{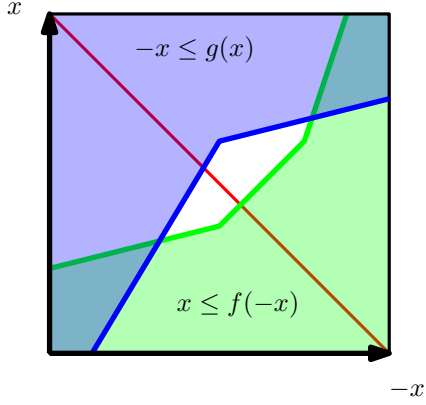}
\caption{The case where the third inference rule is needed. There appear to be points that satisfy both constraints, but none where $x=-(-x)$. The first two inference rules don't detect the relationship between $x$ and $-x$ as literals, so aren't powerful enough to refute this case. }
\label{fig:ThirdRule}
\end{figure}
\end{proof}

If $f:\rr\rightarrow \rr$ is a strictly increasing continuous bijection, then $f$ has a well-defined continuous inverse. The constraint $x\le f(y)$ could equivalently be written $f^{-1}(x)\le y$. Define $\widetilde{f}:\rr \rightarrow \rr$ by $\widetilde{f}(x)=-f^{-1}(-x)$. So $\widetilde{f}$ is a strictly increasing continuous bijection and $x\le f(y)$ is equivalent to $-y\le \widetilde{f}(-x)$. Note that $C(x, y, f)$ contains exactly the same clauses as $C(-y, -x, \widetilde{f})$. We say that a continuous formula $\Gamma$ is \emph{symmetric} if, for every constraint of the form $x\le f(y)$, the constraint $-y\le \widetilde{f}(-x)$ also appears.

\begin{lemma}
If $\Gamma$ is symmetric and $\Gamma'$ can be derived from $\Gamma$, then a symmetric formula containing $\Gamma'$ can be obtained from $\Gamma'$.
\end{lemma}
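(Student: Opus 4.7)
The plan is to reduce to a single inference step and then case-analyze on which of the three continuous inference rules produced $\Gamma'$ from $\Gamma$. By induction on the number of inference steps, it suffices to show: if $\Gamma$ is symmetric and one inference step adds a new constraint $c$ (or $\emptyset$) to produce $\Gamma' = \Gamma \cup \{c\}$, then the symmetric counterpart of $c$ can be derived from $\Gamma \subseteq \Gamma'$. Adding this counterpart to $\Gamma'$ then yields a symmetric formula containing $\Gamma'$, and iterating through the steps of a longer derivation gives the full statement.

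The main algebraic ingredient is that the operation $f \mapsto \widetilde{f}$ is an anti-homomorphism with respect to composition, i.e.\ $\widetilde{f \circ g} = \widetilde{g} \circ \widetilde{f}$. This is immediate from the definition $\widetilde{f}(x) = -f^{-1}(-x)$, since
\[
\widetilde{f \circ g}(x) = -(f \circ g)^{-1}(-x) = -g^{-1}(f^{-1}(-x)) = \widetilde{g}(\widetilde{f}(x)),
\]
using in the last step that $\widetilde{f}(x) = -f^{-1}(-x)$ rearranges to $f^{-1}(-x) = -\widetilde{f}(x)$. A similar short computation shows $\widetilde{\widetilde{f}} = f$, so the symmetric counterpart of a symmetric counterpart is the original constraint.

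I then handle each of the three inference rules. For the composition rule, if $c = (x \le f(g(z)))$ was derived from $x \le f(y)$ and $y \le g(z)$, then by symmetry of $\Gamma$ the constraints $-y \le \widetilde{f}(-x)$ and $-z \le \widetilde{g}(-y)$ already lie in $\Gamma$. Applying the composition rule to these two constraints, taking $-z \le \widetilde{g}(-y)$ as the first premise and $-y \le \widetilde{f}(-x)$ as the second, produces $-z \le \widetilde{g}(\widetilde{f}(-x))$, which by the identity above equals $-z \le \widetilde{f \circ g}(-x)$, the symmetric counterpart of $c$. For each of the two rules that conclude $\emptyset$, there is no constraint $c$ requiring a symmetric counterpart, so $\Gamma \cup \{\emptyset\}$ is trivially symmetric and no further derivation is needed.

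The only step that requires real care is the anti-homomorphism identity and getting the order of composition right: it is easy to slip and produce $\widetilde{f} \circ \widetilde{g}$ instead of $\widetilde{g} \circ \widetilde{f}$, which would fail to be the symmetric counterpart of $x \le f(g(z))$. The reversal of premise order in the second application of the composition rule is exactly what compensates for this; once that bookkeeping is set down carefully, the rest of the case analysis and the induction are routine.
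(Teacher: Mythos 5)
Your proof is correct and follows essentially the same approach as the paper: the paper's entire proof is the single identity $\widetilde{f\circ g}=\widetilde{g}\circ\widetilde{f}$, and you have simply filled in the routine induction and case analysis that the paper leaves implicit. The anti-homomorphism computation and the observation that the two $\emptyset$-producing rules need no counterpart are both right.
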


\begin{proof}
\[\widetilde{f\circ g}=\widetilde{g}\circ\widetilde{f}\]
\end{proof}

We can now state the main result of this section:

\resolutioncompleteness*


To prove \Cref{lem:ResolutionCompleteness}, we will use \Cref{lem:Compactness} to show that if $\Gamma$ is unsatisfiable, then the empty clause can be derived from the constraints in the $C(x, y, f)$ by binary resolution. If a clause $C$ can be derived from clauses in $C(x, y, f)$ and $C(y, z, g)$, then we would like to argue that $C$ is contained in $C(x, z, f\circ g)$. However, this is not quite true.

For a clause $C$, we say that another clause $C'$ is \emph{stronger than} $C$ if the set of values satisfying $C'$ is contained in the set of values satisfying $C$. Roughly speaking, the strategy to prove \Cref{lem:ResolutionCompleteness} is to show that, if a clause $C$ can be derived from $C(x, y, f)$ and $C(y, z, g)$, then a stronger clause $C'$ is contained in $C(x, z, f\circ g)$.

However, this creates a problem. The inference system in \cite{Monotone2SAT} implicitly allows a clause of form $(x\le c)\vee (x\le c)$ to be contracted to $x\le c$. However, the stronger clause $(x\le c)\vee (x\le d)$ for $d<c$ can't be contracted in this way. This is a problem for us because it means that derivations behave badly with respect to strengthening clauses. A derivation starting with a formula including $(x\le c)\vee (x\le c)$ might need fewer steps than one where $(x\le c)\vee (x\le c)$ is replaced by $(x\le c)\vee (x\le d)$ for $d<c$. We instead use an inference system with \emph{strong contractions}, where $(x\le c)\vee (x\le d)$ can be contracted to $x\le c$ whenever $c\ge d$.

\begin{lemma}\label{lem:ProofMonotonicity}
Let $\Gamma$ be a formula that can be refuted in $k$ steps of binary resolution, allowing strong contractions. If $\Gamma'$ is obtained from $\Gamma$ by replacing a clause $C$ with a stronger clause $C'$, then there is a derivation of the empty clause from $\Gamma'$ requiring the same number of steps, again allowing strong contractions.
\end{lemma}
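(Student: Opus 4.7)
The plan is to induct on the length $m$ of the original refutation. Let the derivation be $D_1, \ldots, D_m = \emptyset$, where each $D_i$ is either a hypothesis from $\Gamma$, a binary resolvent of earlier clauses, or a strong contraction of an earlier clause. I would construct a sequence $D'_1, \ldots, D'_m = \emptyset$ that is a valid derivation from $\Gamma'$ of length at most $m$, maintaining the invariant that $D'_i$ semantically implies $D_i$ for every $i$. Hypotheses are handled directly: for $D_i \in \Gamma \setminus \{C\}$ set $D'_i = D_i$, and for $D_i = C$ set $D'_i = C'$. Strong contractions are equally easy: when $D_i$ is the contraction of $D_j$, set $D'_i = D'_j$, using that $D_j$ and $D_i$ are semantically equivalent, so $D'_j \Rightarrow D_j \Rightarrow D_i$.

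The substantive case is binary resolution. Suppose $D_j = (x \le a) \vee T_1$ and $D_k = (x \ge b) \vee T_2$ with $a < b$ resolve to $D_i = T_1 \vee T_2$. First I would establish a structural observation: whenever a clause $D$ implies a clause of the form $(x \le a) \vee T$ where $T$ is a literal on a variable $y \ne x$, every literal of $D$ is either of the form $(x \le a')$ with $a' \le a$, or a literal on $y$ at least as tight as $T$. The proof is a contrapositive argument: the failure region of $(x \le a) \vee T$ is a Cartesian product in $(x,y)$-space, and each literal of $D$ must individually have its satisfying region disjoint from this product, so only literals of the two allowed types survive. An analogous statement applies to $D'_k$.

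Given this, the resolution step in the new derivation splits into two scenarios. If both $D'_j$ and $D'_k$ still contain pivot literals $(x \le a')$ and $(x \ge b')$ with $a' \le a$ and $b' \ge b$, then $a' < b'$ and binary resolution produces $D'_i$; by the structural observation, the non-pivot remainders each imply the corresponding $T_1$ or $T_2$, so $D'_i \Rightarrow T_1 \vee T_2 = D_i$ in a single inference step. Otherwise, at least one parent — say $D'_j$ — has been strengthened so much that its $x \le$ pivot literal has vanished; the structural observation then forces $D'_j$ to be either $\emptyset$ or a unary clause on the variable of $T_1$ that already implies $T_1$. In either case $D'_j$ already implies $D_i$, so take $D'_i = D'_j$ and spend no new step.

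Since the only clause semantically implying $\emptyset$ is $\emptyset$ itself, the final clause $D'_m$ is $\emptyset$, and the construction yields a refutation of $\Gamma'$ of length at most $m$; trivial repetitions can pad the count to exactly $m$ if required. The main obstacle will be proving the structural observation cleanly, especially handling edge cases where a variable's range is bounded so that certain literals are vacuously satisfied within the range, permitting additional literal forms the naive argument would miss; these will need to be either ruled out or absorbed into the main case analysis without damaging the invariant that $D'_i \Rightarrow D_i$.
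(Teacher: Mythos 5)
The paper's own proof is the single word ``Straightforward,'' so your elaborated induction---maintaining the invariant $D'_i \Rightarrow D_i$ and handling hypothesis, contraction, and resolution steps in turn---is the expected route.

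The substantive gap is in your structural observation and the case analysis built on it: you state and use it only for the configuration where the non-pivot literal $T$ lives on a variable $y \ne x$. But in the paper's actual application of this lemma (in the proof of \Cref{lem:ResolutionCompleteness}), clauses from $C(x,-x,f)$ have the form $(x\le f(c)) \vee (x\le -c)$---\emph{both} literals are $\le$-literals on the pivot variable $x$. For such a parent $D_j = (x\le a)\vee(x\le a_2)$ with $a<a_2$, a stronger clause $D'_j$ is only forced to have all its literals of the form $(x\le a')$ with $a'\le a_2$; nothing guarantees a literal $\le a$ that can still serve as the pivot. For instance, with $a=3$, $a_2=10$, $b=5$, $D'_j=(x\le 7)\vee(x\le 8)$ and $D'_k=(x\ge 6)\vee T'_2$, no resolution step is even legal. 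Your dichotomy (``pivot present'' vs.\ ``pivot vanished, so $D'_j$ is empty or unary'') breaks here: $D'_j$ is a non-empty $2$-term clause with no usable pivot. The conclusion still holds---$D'_j$ already semantically implies $(x\le a_2)$ and hence $D_i$, so one sets $D'_i := D'_j$ and spends no resolution step---but your argument as written does not reach it. To close the gap you need to state the structural observation also for $T$ on $x$ itself, and add the case ``$D'_j$ is a $2$-term clause both of whose literals imply $T_1$.'' The vacuous-literal and tautologous-clause caveats you already flagged are genuine but routine; the same-variable case is the one that matters for this paper.
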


\begin{proof}
Straightforward.
\end{proof}

We are now ready to prove \Cref{lem:ResolutionCompleteness}. 

\begin{proof}[Proof of \Cref{lem:ResolutionCompleteness}]
Let $\Xi$ be formula consisting of the disjunction of all the $C(x, y, f)$ for each $x\le f(y)$ that occurs in $\Gamma$. By \Cref{lem:ConversionCorrectness}, $\Xi$ is unsatisfiable, so by \Cref{lem:Compactness}, there is a derivation of the empty clause from $\Xi$. We will show that if a clause $C$ that can be obtained from $\Xi$ by binary resolution, then we can derive some $x\le f(y)$ from $\Gamma$ such that $C(x, y, f)$ contains a clause stronger than $C$. By induction and \Cref{lem:ProofMonotonicity}, this will show that the empty clause can be derived from $\Gamma$. 

Each clause in $\Xi$ has form either $(x\le f(c))\vee (y\ge c)$, $x\le f(\text{max}(\text{range}(y)))$, or $x\ge f^{-1}(\text{min}(\text{range}(y)))$ for some constraint $x\le f(y)$ in $\Gamma$. The clause $x\ge f^{-1}(\text{min}(\text{range}(y)))$ can alternately be written as $-x\le \widetilde{f}(\text{max}(\text{range}(-y)))$. Since $\Gamma$ is symmetric, we can always use the later form of this clause.

Throughout the proof, we will view terms of form $x\le c$ and $-x\ge -c$ as being interchangeable. We will choose to write terms in either $\le$ or $\ge$ form where convenient. 

Let $C$ be a clause that can be derived directly from $\Xi$ by binary resolution. There are three cases to consider, depending on whether $C$ has $0$, $1$, or $2$ terms.

Suppose $C$ has two terms, write $C=(x\ge c)\vee (z\le d)$ where $x$ and $z$ are literals and $c$ and $d$ are constants. In order to produce $C$ by binary resolution, we need clauses $(y\le a)\vee (x\ge c)$ and $(y\ge b)\vee (z\le d)$ for some literal $y$ and constants $b>a$. 

A clause of form $(y\le a)\vee (x\ge c)$ can occur in some $C(y, x, f)$ as $(y\le f(c_1))\vee (x\ge c)$ (so $a=f(c)$), or in $C(-y, -x, f)$ as $(-x\le f(-a))\vee (-y\ge -a)$ (so $c=f(-a)$). Since $\Gamma$ is symmetric, we can assume that it is the first case; if the clause comes from $C(-y, -x, f)$, then there is an equivalent clause in $C(x, y, \widetilde{f})$. 

Similarly, we can suppose that $(y\ge b)\vee (z\le d)$ comes from some $C(z, y, g)$, where $d=g(b)$. 

Now the constraints $y\le f(x)$ and $z\le g(y)$ yield $v\le g(f(x))$. We have $d=g(b)\ge g(a)=g(f(c))$ so $C(z, x, g\circ f)$ contains $(z\le g(f(c)))\vee (x\ge c)$, which is as least as strong as $(x\ge c)\vee (z\le d)$.

Now suppose that $C$ has $1$ term, write $C=(x\le a)$. The only way to way to obtain this clause by binary resolution is to start with one clause with two terms and one clause with one term. The clause with two terms can be written $(x\le f(c))\vee (y\ge c)$, coming from a constraint $x\le f(y)$, where $f(c)=a$.

Clauses of form $x\le f(\text{max}(\text{range}(y)))$ appear directly in $\Xi$, but there are also contractions of a clauses $(y\le f(c))\vee (-y\ge c)$ contained in a constraint like $C(y, -y, f)$.

Suppose the single-term clause is of form $y\le g(\text{max}(\text{range}(z)))$, so $g(\text{max}(\text{range}(z)))<c$ in order to apply resolution. Now $\Gamma$ has constraints $x\le f(y)$ and $y\le g(z)$, so we can derive $x\le f(g(z))$, and $C(x, z, f\circ g)$ contains $x\le y\le f(g(\text{max}(\text{range}(z))))$. Since $f(g(\text{max}(\text{range}(z))))\le f(c)=a$, this is at least as strong as $C$. 

Now suppose that the single-term clause $y\le b$ appears as a contraction of a clause like $(y\le g(d))\vee (-y\ge d)$ from $C(y, -y, g)$. By symmetry, we can suppose that $g(d)\ge -d$ (since otherwise $-\widetilde{g}(-g(d))\ge -g(d)$ and we could use $C(y, -y, \widetilde{g})$ instead), so the clause contracts to $(y\le g(d))$. Since we can apply resolution, $g(d)<c$. 

From $x\le f(y)$ and $y\le g(-y)$ we derive $x\le f(g(-y))$. From$x\le f(g(-y))$ and $-y\le \widetilde{f}(-x)$ we derive $x\le f(g(\widetilde{f}(-x)))$. Now $\widetilde{f}(-f(c))=-c<-g(d)$, so $g(\widetilde{f}(-f(c)))\le g(-g(d))\le g(d)<c$. So $f(g(\widetilde{f}(-f(c))))\le f(c)$, and so $(x\le f(g(\widetilde{f}(-f(c)))))\vee (-x\ge -f(c))$ contracts to $-x\ge -f(c)$, which is equivalent to $x\le f(c)$. 

Finally, we consider the case where $C=\emptyset$. This must have come from the resolution of two clauses each with $1$ term. There are three cases to consider, depending on whether $0$, $1$, or $2$ of the inputs came from a contraction.

If neither input came from a contraction, then we started with clauses of form $x\le f(\text{max}(\text{range}(y)))$ and $-x\le g(\text{max}(\text{range}(z)))$ where $f(\text{max}(\text{range}(y)))<-g(\text{max}(\text{range}(z)))$. From $x\le f(y)$ and $-z\le \widetilde{g}(x)$, we can derive $-z\le \widetilde{g}(f(y))$. Since $f(\text{max}(\text{range}(y)))<-g(\text{max}(\text{range}(z)))$, we have $\widetilde{g}(f(\text{max}(\text{range}(y))))<-\text{max}(\text{range}(z))=\text{min}(\text{range}(-z))$. So $\emptyset$ can be derived from $-z\le \widetilde{g}(f(y))$.

If one of the inputs comes from a contraction, then we started with clauses $x\le f(\text{max}(\text{range}(y)))$ and $(-x \le g(-c))\vee (x\ge -c)$. By symmetry we can suppose that $g(-c)\le c$. Now $-x\le g(f(y))$ can be derived from $-x\le g(x)$ and $x\le f(y)$. Since $f(\text{max}(\text{range}(y)))<-c$, $g(f(\text{max}(\text{range}(y))))\le c$. So we now have $x\le f(y)$ and $-x\le g(f(y))$ where $f(\text{max}(\text{range}(y)))<-g(f(\text{max}(\text{range}(y))))$, which reduces to the previous case.

Finally, we have the case where both clauses come from contractions. We can write these clauses $(x\le f(-a))\vee (-x\ge -a)$ and $(-x\le g(b))\vee (x\ge b)$, where $f(-a)\le a$ and $g(b)\le -b$ (by symmetry). So the clauses contract to $x\le a$ and $x\ge b$, with $b>a$. Choose some $c$ in $(a, b)$. The function $f(-x)-x$ is strictly increasing in $-x$ and $g(x)+x$ is strictly increasing in $x$. Since $-c<-a$, $c<b$ and $f(-a)-a\le 0$ and $g(b)+b\le 0$, we have $f(-c)-c<0$ and $g(c)+c<0$, so $f(-c)<c$ and $g(c)<-c$. So $\emptyset$ can be derived from $x\le f(-x)$ and $-x\le g(x)$.
\end{proof}

\section{Bounded-depth refutations}\label{sec:logdepth}

In general, there is no way to bound the length of the refutation produced by \Cref{lem:ResolutionCompleteness}. However, if a refutation is too large, then we might be able to simplify it. For example, a refutation might generate many different constraints of form $x\le f(y)$, in which case we can replace constraints $x\le f_1(y),\dots, x\le f_k(y)$ with a single constraint $x\le \text{min}(f_1, \dots, f_k)(y)$. 

If a constraint of form $x\le f(x)$ appears, then we can obtain $x\le (f\circ\dots\circ f)(x)$. We may need many compositions of $f$ in order to obtain a strong enough constraint, so this is another way that the length of a refutation can become very large. In these cases, we can compute (roughly) the limit of $f^k$ as $k$ goes to $\infty$. As we show in \Cref{sec:infinitecomposition}, the constraint $x\le f^{\infty}(x)$ and the tautology $x\le x$ are together stronger than any constraint of form $x\le f^k(x)$, and so we are saved from having to compute an arbitrarily large
number of compositions of $f$.

In this section, we show that the addition of $k$-way minimum operations and infinite compositions allow us to produce a refutations where the depth of arithmetic operations required is at most logarithmic. For every pair of literals $x$ and $y$, this refutation tries to calculate (roughly) the function $\text{min}_{x\rightarrow y}$ representing the infimum over all $f$ such that the constraint $x\le f(y)$ can be derived. 

Given an $\mathcal{F}-2\text{SAT}$ instance $\Gamma$, let $G$ be a graph with vertices representing literals $\pm v_i$ and directed edges representing constraints, so that an edge from $x$ to $y$ represents a constraint of form $x\le f(y)$ in $\Gamma$. The constraints $x\le f(y)$ that can be derived from $\Gamma$ correspond to paths from $x$ to $y$ in $G$. 

If $x\le f(y)$ is a constraint that can be derived from $\Gamma$ using the composition, minimum, and infinite composition operations, then the sequence of operations used to define $f$ represents (in some sense) a regular expression matching paths in $G$, and $f$ is the infimum of the constraints represented by the paths matched by the regular expression. Given vertices $x$ and $y$ in $G$, it is possible (see Brzozowski and McCluskey \cite{GraphRegex}) to construct a regular expression matching all directed paths from $x$ to $y$. However, the regular expression produced can be as long as $n^n$, which is not useful for us.

So instead, we construct a regular expression that matches the paths from $x$ to $y$ of length at most $2n$, so in particular it matches any non-repeating paths. We let $f_{x\rightarrow y}$ be the function so that this regular expression constructs the constraint $x\le f_{x\rightarrow y}(y)$. In \Cref{sec:paths}, we show that, for a real number $c$, $\text{min}_{x\rightarrow y}(c)$ is either $f_{x\rightarrow y}(c)$ or $f_{x\rightarrow z}(d)$ for some literal $z$, where $d$ is an attracting fixed point of $f_{z\rightarrow z}$. In the case of $\mathcal{M}-2\text{SAT}$, this will give a bound on the complexity of coordinates required for a satisfying assignment, proving NP-membership for $\mathcal{M}-2\text{SAT}$.

For $\mathcal{M}-2\text{SAT}$, the refutations considered in \Cref{sec:paths} can be explicitly constructed in time $n^{\mathcal{O}(\log(n))}$, allowing us to solve $\mathcal{M}-2\text{SAT}$ in quasi-polynomial time. In particular, $\mathcal{M}-2\text{SAT}$ isn't NP-hard unless the exponential-time hypothesis is false. It seems to be difficult to perform these calculations in polynomial time, since the regular expression represented by $f_{x\rightarrow y}$ is a monotone formula for solving directed st-connectivity. Such a formula must have size at least $n^{\Omega(\log(n))}$ (see Karchmer and Wigderson \cite{STLowerBound}).

\subsection{Infinite compositions}\label{sec:infinitecomposition}

If a constraint of form $x\le f(x)$ appears, then we can obtain $x\le (f\circ\dots\circ f)(x)$. We may need many compositions of $f$ in order to obtain a strong enough constraint. Our strategy to handle this case is to compute (roughly) the limit of $f^k(x)$ as $k$ goes to $\infty$. As we show below, the constraint $x\le f^{\infty}(x)$ and the tautology $x\le x$ are together stronger than any constraint of form $x\le f^k(x)$, so we are saved from having to compute an arbitrarily large number of compositions of $f$.

Given a function $f$, we now define precisely what we mean by the function $f^{\infty}$. Our precise definition is somewhat subtle, because we want $f^{\infty}$ to be piecewise-constant. This might fail if $f$ has infinitely many fixed points, for example if $f$ is a piecewise linear function where one piece is $f(x)=x$.

If $f:\rr \rightarrow \rr$ is a strictly increasing continuous bijection, then we define $f^{\infty}:\rr \rightarrow \rr \cup \{\pm \infty\}$ by:

\[f^{\infty}(x)=\inf\{y\ge \sup\{z\le x : z\le f(z)\} : y>f(y)\}\]

\noindent here taking $\inf\{\emptyset\}=+\infty$ and $\sup\{\emptyset\}=-\infty$.

Equivalently, we could define:

\[f^{\infty}(x)=\lim_{\epsilon\rightarrow 0^+}\lim_{k\rightarrow\infty}(f+\epsilon)^{k}(x)\] 

\noindent but the first definition will be easier to work with. \Cref{fig:InfiniteComposition} illustrates this construction. 

\begin{figure}
\centering
\includegraphics[page=3]{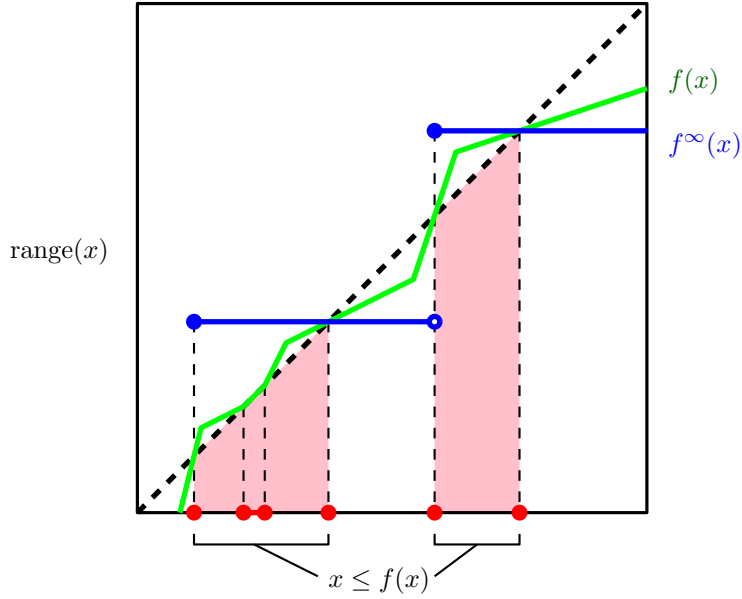}
\caption{Illustration of $f^{\infty}$ }
\label{fig:InfiniteComposition}
\end{figure}

We now prove some useful properties of this construction. 

\begin{lemma}\label{lem:InfiniteBound}
If $f(x)<x$, then $f^{\infty}(x)\le f(x)$. 
\end{lemma}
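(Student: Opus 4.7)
The plan is to plug $y = f(x)$ into the defining infimum for $f^{\infty}(x)$ and verify it lies in the set being infimized. Two things must be checked: that $f(x)$ lies above $A := \sup\{z \le x : z \le f(z)\}$, and that $f(x) > f(f(x))$.

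The second is immediate: since $f$ is strictly increasing and $f(x) < x$, applying $f$ to both sides gives $f(f(x)) < f(x)$, so setting $y = f(x)$ we have $y > f(y)$, which means $y$ qualifies for the inner set in the definition of $f^{\infty}(x)$.

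For the first, if $\{z \le x : z \le f(z)\}$ is empty then $A = -\infty$ and there is nothing to do. Otherwise, I would argue that $A$ itself satisfies $A \le f(A)$, by picking a sequence $z_n \to A$ with $z_n \le f(z_n)$ and passing to the limit using continuity of $f$. Since every such $z_n$ satisfies $z_n \le x$, we also get $A \le x$, and monotonicity of $f$ then yields $A \le f(A) \le f(x)$. Thus $y = f(x) \ge A$, and $y$ sits in the set $\{y \ge A : y > f(y)\}$ whose infimum defines $f^{\infty}(x)$, so $f^{\infty}(x) \le f(x)$.

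The only subtle point is the continuity-plus-monotonicity sandwich $A \le f(A) \le f(x)$; everything else is just unpacking the definition. Since $f$ is assumed to be a strictly increasing continuous bijection, both steps of this sandwich are routine, so I do not expect a real obstacle here — the lemma essentially says that once you are past the last fixed-point-like region below $x$, one application of $f$ already lands you in a region where $f^{\infty}$ has saturated to something $\le f(x)$.
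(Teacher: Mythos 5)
Your argument is correct, and it reaches the right two sub-claims: that $y=f(x)$ satisfies $y>f(y)$, and that $f(x)\ge A$ where $A=\sup\{z\le x : z\le f(z)\}$. But the route you take to the second sub-claim is a genuine detour compared to the paper's proof. The paper argues directly: for any $z$ with $f(x)<z\le x$, monotonicity gives $f(z)\le f(x)<z$, so no such $z$ is in the set; hence $A\le f(x)$ immediately. This uses only (strict) monotonicity. You instead first establish that the supremum $A$ is itself a fixed-point-like object ($A\le f(A)$), which requires a sequence-and-continuity argument, and then chain $A\le f(A)\le f(x)$ using $A\le x$. Both arguments are sound, but yours invokes continuity of $f$ where the paper does not need it for this lemma, and it takes two steps where one suffices. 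The structural observation you make — that $A$ is a point where $f$ ``catches up'' — is the content of the later parts of \Cref{lem:LimitFiniteness} in the paper, so you have essentially front-loaded a piece of a subsequent lemma to prove this one. That's not wrong, but the paper keeps the lemmas decoupled by giving the shorter direct argument here.
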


\begin{proof}
If $z$ is such that $f(x)< z\le x$ then $f(z)\le f(x)<z$. So $\sup\{z\le x : z\le f(z)\}\le f(x)$. Since $f(x)<x$, $f(f(x))<f(x)$, so $f^{\infty}(x)\le f(x)$.
\end{proof}

\begin{lemma}\label{lem:LimitValidity}
$x\le f(x)$ if and only if $x\le f^{\infty}(x)$.
\end{lemma}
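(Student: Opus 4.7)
The plan is to handle the two directions separately, and the previous lemma does essentially all the work for one of them.

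For the reverse direction ($x \le f^{\infty}(x) \Rightarrow x \le f(x)$), I would argue by contrapositive. If $x > f(x)$, then \Cref{lem:InfiniteBound} immediately gives $f^{\infty}(x) \le f(x) < x$, so $x \not\le f^{\infty}(x)$. This direction is therefore a one-line consequence of the previous lemma.

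For the forward direction ($x \le f(x) \Rightarrow x \le f^{\infty}(x)$), I would unpack the definition. Let $S = \{z \le x : z \le f(z)\}$ and $s = \sup S$. By definition $S \subseteq (-\infty, x]$, so $s \le x$. The assumption $x \le f(x)$ means $x$ itself lies in $S$, forcing $s \ge x$, and hence $s = x$. Then
\[
f^{\infty}(x) = \inf\{y \ge x : y > f(y)\},
\]
which is an infimum of a set of values that are all at least $x$ (with the convention $\inf \emptyset = +\infty$, which is also $\ge x$). Thus $f^{\infty}(x) \ge x$, as desired.

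There is really no obstacle here; the only thing to be careful about is the edge case where $s = x$ is attained as a limit from below rather than as an element of $S$, but the assumption $x \le f(x)$ rules this out by placing $x$ itself in $S$. The proof should be essentially two short paragraphs, one per direction.
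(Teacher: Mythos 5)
Your proof is correct and follows essentially the same approach as the paper: the forward direction unpacks the definition of $f^{\infty}$ to show $\sup\{z\le x: z\le f(z)\}=x$ when $x\le f(x)$, and the reverse direction is the contrapositive via \Cref{lem:InfiniteBound}. The only difference is that you spell out the supremum calculation in slightly more detail, which is fine.
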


\begin{proof}
If $x\le f(x)$, then $\sup\{z\le x : z\le f(z)\}=x$, so $f^{\infty}(x)\ge x$.

If $x>f(x)$, then by \Cref{lem:InfiniteBound}, $f^{\infty}(x)\le f(x)$, so $x>f^{\infty}(x)$.
\end{proof}

\begin{lemma}\label{lem:LimitFiniteness}
For all $x\in \rr$, $f^{\infty}(x)=\pm\infty$ or $f^{\infty}(x)$ is a fixed point of $f$ and $f^{\infty}(x)+\epsilon$ is not a fixed point of $f$ for a sequence of $\epsilon>0$ limiting to zero. 

Also $f^{\infty}\circ f^{\infty}=f^{\infty}$.
\end{lemma}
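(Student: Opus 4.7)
The plan is to treat the lemma as three separate claims: when $y^{*}:=f^{\infty}(x)$ is finite, first that $y^{*}=f(y^{*})$, second that $y^{*}$ is locally maximal, and third the idempotence $f^{\infty}\circ f^{\infty}=f^{\infty}$. The case $f^{\infty}(x)=\pm\infty$ is vacuous for the characterization and is handled for idempotence via the conventions $f^{\infty}(\pm\infty)=\pm\infty$.

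Write $s(x)=\sup\{z\le x:z\le f(z)\}$, and note that the sets $\{y\le f(y)\}$ and $\{y\ge f(y)\}$ are closed since $f$ is continuous. To show $y^{*}\le f(y^{*})$, I argue that any $y\in[s(x),y^{*})$ must satisfy $y\le f(y)$: otherwise $y$ would witness a member of $\{y\ge s(x):y>f(y)\}$ strictly below the infimum $y^{*}$. Letting $y\uparrow y^{*}$ and using closedness of $\{y\le f(y)\}$ yields the bound, with the boundary case $y^{*}=s(x)$ handled separately by choosing $z_{n}\uparrow y^{*}$ in $\{z\le x:z\le f(z)\}$ and using continuity. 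For $y^{*}\ge f(y^{*})$, the infimum supplies $y_{n}\downarrow y^{*}$ with $y_{n}>f(y_{n})$; continuity of $f$ again gives the inequality in the limit. Together, $y^{*}=f(y^{*})$.

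For local maximality, the intended idea is that because $y^{*}$ is an infimum, points of $\{y>f(y)\}$ accumulate at $y^{*}$ from the right, so in every interval $(y^{*},y^{*}+\delta)$ there is some $y$ with $y>f(y)$. I plan to combine this with the monotonicity of $f$ and the fact that a fixed point $p\in(y^{*},y^{*}+\delta)$ would satisfy $p=f(p)$: for any $y\in(y^{*},p)$ with $y>f(y)$, monotonicity gives $f(y)<f(p)=p$, which I would then try to turn into a contradiction with the choice of $y^{*}$ as the infimum. This is the step I expect to be the main obstacle, since a priori nothing in the definition seems to rule out an $f$ whose fixed-point set contains, say, a sequence $\{y^{*}+1/n\}$ accumulating to $y^{*}$ from above while still having $y>f(y)$ on the gaps between them. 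To close this gap I would fall back on the alternative definition $f^{\infty}(x)=\lim_{\epsilon\to 0^{+}}\lim_{k\to\infty}(f+\epsilon)^{k}(x)$: for fixed $\epsilon>0$ the fixed points of $f+\epsilon$ are a closed set avoiding any interval where $f(y)=y$, and $\lim_{k}(f+\epsilon)^{k}(x)$ is an isolated attracting fixed point of $f+\epsilon$; the double limit then picks out a locally maximal fixed point of $f$.

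Granting the first two parts, idempotence is mechanical. If $y^{*}$ is a fixed point of $f$ then $y^{*}\in\{z\le f(z)\}$ and is the largest such $z\le y^{*}$, so $s(y^{*})=y^{*}$. Hence $f^{\infty}(y^{*})=\inf\{y\ge y^{*}:y>f(y)\}\ge y^{*}$ trivially; and $\le y^{*}$ because the sequence $y_{n}\downarrow y^{*}$ with $y_{n}>f(y_{n})$ produced above lies in $\{y\ge y^{*}:y>f(y)\}$. Therefore $f^{\infty}(f^{\infty}(x))=f^{\infty}(y^{*})=y^{*}=f^{\infty}(x)$, completing the lemma.
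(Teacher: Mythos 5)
Your arguments for the fixed-point property and for idempotence are correct, and are more careful than the paper's own proof (which states the key inequality with what appears to be a typo and only gestures at the argument). In fact, your idempotence step uses only the fixed-point property together with the fact that $\{y : y > f(y)\}$ accumulates at $y^{*}$ from the right; it does not actually use local maximality even though you frame it as ``granting the first two parts.'' That is worth keeping: the paper derives idempotence \emph{via} the locally-maximal claim, whereas your route makes idempotence independent of it.

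The local-maximality claim, which you correctly flag as the obstacle, is a genuine gap, and the fallback you propose does not close it. Take a strictly increasing continuous bijection $f:\rr\to\rr$ whose fixed points near the origin are exactly $\{0\}\cup\{1/n : n\ge 1\}$, with $f(y)<y$ on each gap $(1/(n+1),1/n)$; such an $f$ is easy to build piecewise. Then $\sup\{z\le 0 : z\le f(z)\}=0$ and $\inf\{y\ge 0 : y>f(y)\}=0$, so $f^{\infty}(0)=0$, a fixed point that is not locally maximal since $1/n\to 0^{+}$ are fixed points. The alternative definition does not rescue this: $(f+\epsilon)^{k}(0)$ converges to the smallest fixed point of $f+\epsilon$ above $0$, which lies in one of the gaps, and as $\epsilon\downarrow 0$ that point slides down to $0$. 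The paper's own proof has the same lacuna — it shows the value is a fixed point and then silently treats it as locally maximal — but is saved in its applications because the lemma is only ever invoked for piecewise fractional-linear $f$ with finitely many pieces, whose fixed-point set is a finite union of points and closed intervals; there, every value of $f^{\infty}$ is the right endpoint of a component and hence locally maximal. The honest fix is either to add a hypothesis of this kind (finitely many connected components of the fixed-point set), or to drop local maximality from the general statement and re-establish isolation specifically for piecewise fractional-linear functions where it is used.
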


\begin{proof}
When $f^{\infty}(x)$ is finite, the definition tells us that $f^{\infty}(x)+\epsilon>f(f^{\infty}(x)+\epsilon)$ for arbitrarily small $\epsilon\ge 0$ and that $f^{\infty}(x)-\delta\le f(f^{\infty}(x)-\delta)$ for arbitrarily small $\delta\ge 0$. By continuity of $f$, $f^{\infty}(x)=f(f^{\infty}(x))$. 

So $f^{\infty}(x)+\epsilon>f(f^{\infty}(x)+\epsilon)$ for arbitrarily small $\epsilon>0$ (since it does not hold when $\epsilon=0$). 

If $x$ is a fixed point of $f$ and $x+\epsilon>f(x+\epsilon)$ for arbitrarily small $\epsilon>0$, then $\sup\{z\le x:z\le f(z)\}=x$ and $\inf\{y\ge x:y>f(y)\}=x$, so $f^{\infty}\circ f^{\infty}=f^{\infty}$.
\end{proof}

Those fixed points of $f$ that occur as values of $f^{\infty}$ are called \emph{attracting points} of $f$.

\subsection{Compressing paths}\label{sec:paths}

Given two literals $x$ and $y$, any function $f$ such that $x\le f(y)$ can be derived from a continuous formula $\Gamma$ can be written $f=f_1\circ\dots\circ f_k$ for functions $f_i$ from the original problem instance. We would like to compute the minimum (that is, infimum) of the possible functions $f$ such that $x\le f(y)$ can be derived.

Given a continuous formula $\Gamma$, we say a \emph{path} $p$ from a literal $x$ to a literal $y$ is a list of constraints $x=x_1\le f_1(x_2), \dots, x_k\le f_k(x_{k+1}=y)$. A path is non-repeating if $x_i\ne x_j$ for $i\ne j$ unless $\{i, j\}=\{1, k+1\}$ (so a path from $x$ to itself can be non-repeating by this definition). The resolution of a path is the constraint $x\le (f_1\circ \dots \circ f_k)(y)$. We will sometimes write $p(y)$ for $(f_1\circ \dots \circ f_k)(y)$. The interior vertices of a path are the vertices $x_2, \dots, x_{k-1}$. 

The number of paths from $x$ to $y$ is infinite in general, but we hope to use infinite compositions in order to reduce this to a finite size. But there can still be roughly $n!$ non-repeating paths between a given pair of literals. 

We now introduce the minimum operation. If we have constraints $x\le f_1(y), \dots, x\le f_k(y)$, then we can replace them by $x\le \text{min}(f_1, \dots, f_k)(y)$. Note that an expression like $\text{min}(f_1, g_1)\circ \dots\circ \text{min}(f_n, g_n)$ can represent the minimum of an exponential number of paths in only polynomial size. 

Given a continuous formula $\Gamma$, we produce a new continuous formula $H(\Gamma)$ by the following process:

\begin{itemize}
    \item For each pair of constraints of form $x\le f(y)$ and $y\le g(z)$ with $y$ not being the same literal as $x$ or $z$, add the clause $x\le f(g(z))$.
    \item For every pair of (not necessarily distinct) literals $x$ and $y$, write $\{x\le f_i(y):i=1,\dots, k\}$ for the set of constraints from $x$ to $y$ in $\Gamma$, and replace them with the single constraint $x\le \min_i(f_i)(y)$.
\end{itemize}

\begin{lemma}\label{lem:HValidity}
$H(\Gamma)$ is satisfiable if an only if $\Gamma$ is satisfiable. 
\end{lemma}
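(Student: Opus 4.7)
The plan is to show that $\Gamma$ and $H(\Gamma)$ actually have the same set of satisfying assignments (not merely the same satisfiability status). I would verify separately that each of the two operations used to build $H(\Gamma)$ --- adding compositions, then replacing parallel constraints with their pointwise minimum --- is both sound and complete with respect to the satisfying set, and argue the two directions of the biconditional in turn.

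First I would prove that any assignment $\sigma$ satisfying $\Gamma$ also satisfies $H(\Gamma)$. For each composition constraint $x \le f(g(z))$ introduced in the first step, $\sigma$ by assumption satisfies both $x \le f(y)$ and $y \le g(z)$ in $\Gamma$, and monotonicity of $f$ gives $\sigma(x) \le f(\sigma(y)) \le f(g(\sigma(z)))$. For each minimum constraint $x \le \min_i f_i(y)$ produced in the second step, $\sigma$ satisfies $x \le f_i(y)$ for every $i$, so $\sigma(x)$ is at most the pointwise minimum $\min_i f_i(\sigma(y))$.

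Next I would establish the converse. Suppose $\sigma'$ satisfies $H(\Gamma)$, and let $x \le f(y)$ be any constraint in $\Gamma$. The function $f$ appears as one of the $f_i$ in the minimum constraint $x \le \min_i f_i(y)$ associated with the literal pair $(x, y)$ in $H(\Gamma)$; hence $\sigma'(x) \le \min_i f_i(\sigma'(y)) \le f(\sigma'(y))$, and the original constraint is satisfied.

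I do not expect a genuine obstacle here: the whole argument rests on monotonicity of the $f_i$ and the elementary observation that $a \le f_i(b)$ for every $i$ if and only if $a \le \min_i f_i(b)$. The one small point to be careful about is whether the minimum in the second bullet of the definition is taken over just the original constraints of $\Gamma$ or also over the composition constraints introduced in the first bullet; since every composition is already a consequence (under $\sigma$) of the originals by the monotonicity argument above, either interpretation yields the same set of satisfying assignments and the same proof goes through.
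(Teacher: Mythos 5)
Your proof is correct and takes essentially the same approach as the paper: both directions are argued via monotonicity of $f$ and the fact that $a \le f_i(b)$ for all $i$ is equivalent to $a \le \min_i f_i(b)$. Your observation about the ambiguity in the second bullet (whether the minimum ranges over only the original constraints or also the newly added compositions) is a fair reading concern, and you resolve it correctly; the paper's own proof is terser and does not spell out the composition case explicitly, but the underlying argument is identical to yours.
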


\begin{proof}
Whenever a constraint of form $x\le f(y)$ appears in $\Gamma$, a constraint $x\le g(y)$ appears in $H(\Gamma)$ where $g(c)\le f(c)$ for all $c\in \rr$. So if $H(\Gamma)$ is satisfiable then $\Gamma$ is.

If $x\le f_i(y)$ for $i=1, \dots, k$, then $x\le \min_i(f_i)(y)$. If $x\le f(y)$ and $y\le g(z)$, then $x\le f(g(z))$. So if $\Gamma$ is satisfiable, then the same assignment satisfies the constraints in $H(\Gamma)$. 
\end{proof}

Write $H^2(\Gamma)=H(H(\Gamma))$ and write $H^k(\Gamma)$ for the result after applying this process $k$ times.

\begin{lemma}\label{lem:HCompleteness}
Let $\Gamma$ be a continuous formula with $n$ variables. Suppose there is a non-repeating path $p$ from a literal $x$ to a literal $y$ in $\Gamma$. If $m\ge \log_2(n)+1$ then $H^m(\Gamma)$ contains a constraint $x\le f(y)$ with $f(c)\le p(c)$ for all $c\in \rr$.
\end{lemma}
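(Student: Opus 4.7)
The plan is to strengthen the claim and prove by induction on $m$: for every integer $m \ge 0$ and every non-repeating path $p$ of length $\ell \le 2^m$ from a literal $x$ to a literal $y$ in $\Gamma$, the formula $H^m(\Gamma)$ contains a constraint $x \le f(y)$ with $f(c) \le p(c)$ for all $c \in \rr$. The lemma then follows immediately: since $\Gamma$ has $2n$ literals, every non-repeating path has length at most $2n$, and the hypothesis $m \ge \log_2(n)+1$ gives $2^m \ge 2n$.

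The base case $m = 0$ is immediate, since a non-repeating path of length at most $1$ is a single constraint already present in $\Gamma = H^0(\Gamma)$, so we take $f = p$. For the inductive step, assume the claim at stage $m$ and consider a non-repeating path $p$ of length $\ell \le 2^{m+1}$ from $x$ to $y$. If $\ell \le 2^m$, the inductive hypothesis directly gives a suitable constraint in $H^m(\Gamma)$, and the minimum step in the definition of $H$ only strengthens it pointwise when passing to $H^{m+1}(\Gamma)$. Otherwise, I would split $p$ at an interior vertex $z$ (say the middle one) into subpaths $p_1$ from $x$ to $z$ and $p_2$ from $z$ to $y$, each of length at most $2^m$. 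Both subpaths are non-repeating, as subpaths of a non-repeating path are non-repeating. The inductive hypothesis then supplies constraints $x \le f_1(z)$ and $z \le f_2(y)$ in $H^m(\Gamma)$ with $f_1 \le p_1$ and $f_2 \le p_2$ pointwise.

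The key observation is that because $p$ is non-repeating and $z$ is an interior vertex of $p$, the literal $z$ differs from both $x$ and $y$, even in the case $x = y$ (the non-repeating condition allows only the first and last vertices of $p$ to coincide). Hence the composition rule in the definition of $H$ is applicable to the two constraints above, and adds $x \le f_1(f_2(y))$ when computing $H^{m+1}(\Gamma) = H(H^m(\Gamma))$. By the monotonicity of $f_1$, we have $f_1(f_2(c)) \le f_1(p_2(c)) \le p_1(p_2(c)) = p(c)$ for every $c$. The subsequent minimum step of $H$ then replaces all $x$-to-$y$ constraints by a single $x \le g(y)$ with $g(c) \le f_1(f_2(c)) \le p(c)$, closing the induction.

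The only subtlety requiring real care is the verification that the middle split vertex $z$ lies in the interior of the path and is therefore distinct from $x$ and $y$, because the composition rule of $H$ refuses to fire otherwise; this is precisely where the non-repeating hypothesis is used (and why the condition appears in the statement). Apart from this, the argument is a clean divide-and-conquer: each application of $H$ halves the length of the longest path that must still be compressed, so $\lceil \log_2(2n) \rceil$ applications suffice for any non-repeating path in $\Gamma$.
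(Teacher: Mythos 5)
Your proposal is correct and follows essentially the same divide-and-conquer induction as the paper's proof: both strengthen to the claim that every non-repeating path of length at most $2^m$ is dominated by a constraint in $H^m(\Gamma)$, split at an interior vertex, and invoke the composition and minimum steps of $H$. Your write-up is slightly more explicit than the paper's about two details the paper glosses over, namely that the case $\ell\le 2^m$ is handled by the pointwise monotonicity of the minimum step, and that the split vertex is necessarily distinct from both endpoints (so that the composition rule of $H$ actually fires); but these are presentational differences, not a different argument.
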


\begin{proof}
By induction, we show that for non-repeating path $p$ from $x$ to $y$ in $\Gamma$ of length at most $2^k$, there is a constraint $C(x, y, f)$ in $H^k(\Gamma)$ that is stronger than the resolution of $p$. This is clear when $k=0$.

Suppose this is true for some $k$ and $p$ is a path from $x$ to $y$ of length at most $2^{k+1}$. Then $p$ can be split into non-repeating paths $p_1$ from $x$ to $z$ and $p_2$ from $z$ to $y$, each of length at most $2^k$. By induction, $H^k(\Gamma)$ contains constraints $x\le f(z)$ and $z\le g(y)$ with $f\le p_1$ and $g\le p_2$. So $f(g(c))\le p(c)$ for $c\in \rr$. Since $z$ is not the same literal as $x$ or $y$, $H^{k+1}(\Gamma)$ contains a constraint $x\le h(y)$ where $h\le f\circ g$.

A non-repeating path has length at most $2n$, so $H^{m}(\Gamma)$ contains constraints for all such paths if $m\ge \log_2(n)+1$.
\end{proof}

Now let $V(\Gamma)$ be the continuous formula obtained from $H^{\lceil\log_2(n)+1\rceil}(\Gamma)$ by replacing every constraint of form $x\le f(x)$ with $x\le f^{\infty}(x)$. We say that a path in $V(\Gamma)$ is \emph{tight} if it alternates between loops and edges that aren't loops. 

\begin{lemma}\label{lem:VCompleteness}
Suppose there is a path $p$ from $x$ to $y$ in $\Gamma$. Then for each $c\in \rr$, there is a tight path $q$ from $x$ to $y$ in $V(\Gamma)$ such that $q(c)\le p(c)$.
\end{lemma}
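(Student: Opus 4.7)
My plan is to induct on the length $k$ of the path $p$. The base case $k=0$ forces $x=y$ and the empty tight path works. For $k\ge 1$, if $p$ is non-repeating, then \Cref{lem:HCompleteness} gives a constraint $x\le g(y)$ in $H^{\lceil\log_2(n)+1\rceil}(\Gamma)$ with $g\le p$ pointwise. When $x\ne y$, this is a single non-loop edge of $V(\Gamma)$, which is itself a tight path. When $x=y$ the constraint becomes the loop $x\le g^{\infty}(x)$ in $V(\Gamma)$, and I use either the empty tight path (when $c\le p(c)$) or the single-loop tight path (when $c>p(c)$; in that case $g(c)\le p(c)<c$, and \Cref{lem:InfiniteBound} yields $g^{\infty}(c)\le g(c)\le p(c)$).

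If instead $p$ is repeating, let $z$ be the first literal to be revisited: there are positions $i<j$ with $z_i=z_j=z$ and $z_0,\dots,z_{j-1}$ all distinct. Decompose $p=p_A\circ p_B\circ p_C$, where $p_A$ is the non-repeating prefix from $x$ to $z$, $p_B$ is the non-repeating loop at $z$ between positions $i$ and $j$, and $p_C$ is the suffix from $z$ to $y$ (of length $k-j<k$). Set $v=p_C(c)$. If $p_B(v)\ge v$, then by monotonicity $p(c)=p_A(p_B(v))\ge p_A(v)=p'(c)$, where $p'=p_A\circ p_C$ has length $i+(k-j)<k$, and the inductive hypothesis applied to $p'$ yields the required tight $q$.

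If $p_B(v)<v$, the inductive hypothesis applied to $p_C$ produces a tight $q_C$ from $z$ to $y$ with $v':=q_C(c)\le v$. By \Cref{lem:HCompleteness} applied to $p_B$, together with the pointwise minimum taken by $H$, the formula $H^{\lceil\log_2(n)+1\rceil}(\Gamma)$ contains a loop $z\le F(z)$ with $F\le p_B$, and $V(\Gamma)$ replaces it by the (unique) loop $L=F^{\infty}$ at $z$. Since $F(v)\le p_B(v)<v$, \Cref{lem:InfiniteBound} gives $F^{\infty}(v)\le F(v)\le p_B(v)$; combined with monotonicity of $F^{\infty}$, this yields $u:=L(v')\le F^{\infty}(v)\le p_B(v)$. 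Applying the non-repeating case above to $p_A$ with input $u$ produces a tight $q_A$ from $x$ to $z$ with $q_A(u)\le p_A(u)\le p_A(p_B(v))=p(c)$. Concatenating gives the walk $q=q_A\circ L\circ q_C$ with $q(c)=q_A(u)\le p(c)$.

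The main obstacle is ensuring the concatenated walk is genuinely \emph{tight}, since at $z$ one can in principle stack up to three consecutive loops: a trailing loop of $q_A$, the middle $L$, and a leading loop of $q_C$. Two facts resolve this. First, the minimum step of $H$ guarantees that $V(\Gamma)$ has at most one loop per vertex, so every loop at $z$ is literally the same function $L=F^{\infty}$. Second, \Cref{lem:LimitFiniteness} provides the idempotence $L\circ L=L$. Any consecutive string of loops at $z$ therefore collapses into a single copy of $L$, restoring the strict alternation between loops and non-loop edges and completing the construction of the tight path $q$.
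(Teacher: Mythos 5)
Your proof is correct and takes essentially the same approach as the paper's: the same induction on path length, the same decomposition of a repeating path into a non-repeating prefix, a non-repeating cycle, and a suffix, the same case split on whether the cycle raises or lowers the value at the relevant point, the same use of \Cref{lem:HCompleteness} and \Cref{lem:InfiniteBound}, and the same collapse of consecutive loops via uniqueness of loops in $V(\Gamma)$ and the idempotence $f^\infty\circ f^\infty=f^\infty$ from \Cref{lem:LimitFiniteness}. The only differences are cosmetic: you spell out the base case, the non-repeating loop subcase $x=y$, and you pin down the split point as the first revisit, none of which changes the substance.
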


\begin{proof}
If $p$ is non-repeating then we are done by \Cref{lem:HCompleteness}.

Now induct on the length of $p$. A path of length $1$ can never repeat. Suppose that $p$ repeats a literal $z$. Split $p$ at the first and second points where it hits $z$, so $p$ splits into sub-paths $\{p_1, p_2, p_3\}$ where $p_2$ is a non-repeating cycle. 

Let $c\in \rr$. First suppose that $(p_2\circ p_3)(c)\ge p_3(c)$. In this case, $(p_1\circ p_2 \circ p_3)(c)\le (p_1\circ p_3)(c)$. The path $\{p_1, p_3\}$ is a path shorter than $p$, so $p_1(p_3(c))$ is at most $q(c)$ where $q$ is a tight path.

Now suppose that $p_2(p_3(c))<p_3(c)$. Since $p_3$ is shorter than $p$, $q_3(c)\le p_3(c)$ for some tight path $q_3$. By \Cref{lem:HCompleteness}, there is a loop $\ell$ in $H^{\lceil\log_2(m)+1\rceil}(\Gamma)$ where $\ell(p_3(c))\le p_2(p_3(c))$. Since $p_2(p_3(c))<p_3(c)$, $\ell(p_3(c))<p_3(c)$, so $\ell^{\infty}(p_3(c))\le p_2(p_3(c))$ by \Cref{lem:InfiniteBound}. So $\ell^{\infty}(q_3(c))\le p_2(p_3(c))$. There is a tight path $q_1$ such that $q_1(\ell^{\infty}(q_3(c)))\le p_1(\ell^{\infty}(q_3(c)))\le p_1(p_2(p_3(c)))$. 

The path $(q_1, \ell^\infty, q_3)$ may not be tight since the paths $q_1$ and $q_3$ can have loops at the ends, so $(q_1, \ell^\infty, q_3)$ might have multiple loops in a row. But $V(\Gamma)$ has only one loop per vertex, and $\ell^\infty\circ \ell^\infty=\ell^\infty$, so if there are multiple loops in a row then we can remove all but one of them to obtain a tight path. 

So for each $c\in \rr$, there is a tight path $q$ in $V(\Gamma)$ with $q(c)\le p(c)$. 
\end{proof}

Write $x\le f_{x\rightarrow y}(y)$ for the constraint from $x$ to $y$ in $H^{\lceil\log_2(n)+1\rceil}(\Gamma)$ (if one exists) and define:

\begin{equation}\label{eq:minxy}
\text{min}_{x\rightarrow y}(y)=\text{inf}\{p(y) : p\text{ is a tight path in }V(\Gamma)\}
\end{equation}

In general, $\text{min}_{x\rightarrow y}$ is not really a minimum but an infimum. However, for suitably ``nice'' functions, it really is a minimum. Recall that a function is analytic if it is $C^{\infty}$ and the Taylor series around any point converges in a positive radius of that point. Polynomials and rational functions are analytic. 

\begin{lemma}\label{lem:minvalues}
Suppose that the functions in $\Gamma$ are piecewise analytic. Then the infimum in \eqref{eq:minxy} is attained. So for $c\in \mathbb{R}$, the value of $\text{min}_{x\rightarrow y}(c)$ is one of:

\begin{itemize}
    \item $f_{x\rightarrow y}(c)$
    \item An attracting point of $f_{x\rightarrow x}$
    \item $f_{x\rightarrow z}(d)$ for some literal $z\ne x$, where $d$ is an attracting point of $f_{z\rightarrow z}$
    \item $\pm \infty$
\end{itemize}
\end{lemma}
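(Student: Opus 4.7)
The plan is a direct case analysis on the structure of an arbitrary tight path from $x$ to $y$ in $V(\Gamma)$, showing its value at $c$ must take one of the three listed forms, and then concluding the same for the minimum. The key structural facts I would use are that $V(\Gamma)$ has at most one directed edge between any ordered pair of distinct literals (since the $H$ operation collapses parallel edges by taking minima) and at most one loop per literal, necessarily of the form $f_{z\rightarrow z}^{\infty}$. Combined with the alternation requirement for tight paths, this forces the first edge of any tight path from $x$ to $y$ to be either a loop at $x$ (which must be $f_{x\rightarrow x}^{\infty}$) or a non-loop edge $x\rightarrow z$ for some $z \ne x$; in the latter case, if the path has more than one edge, alternation forces the second edge to be $f_{z\rightarrow z}^{\infty}$.

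Case 1 (no loops): Tight alternation forces the path to consist of a single non-loop edge, which must be the unique edge labelled $f_{x\rightarrow y}$, giving $p(c) = f_{x\rightarrow y}(c)$, the first form.
Case 2 (first edge is a loop): Then the resolution is $f_{x\rightarrow x}^{\infty}\circ q$ for some subpath $q$, and by \Cref{lem:LimitFiniteness} the outer application outputs either $\pm\infty$ or a locally maximal fixed point of $f_{x\rightarrow x}$, i.e.\ an attracting point, the second form.
Case 3 (first edge is a non-loop, followed by a loop): Then the path begins $f_{x\rightarrow z}\circ f_{z\rightarrow z}^{\infty}\circ q'$ for some $z \ne x$ and subpath $q'$. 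The inner $f_{z\rightarrow z}^{\infty}$ again outputs $\pm\infty$ or an attracting point $d$ of $f_{z\rightarrow z}$ by \Cref{lem:LimitFiniteness}, yielding $p(c) = f_{x\rightarrow z}(d)$, the third form.

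Since every tight path's value at $c$ is accounted for by one of these three cases, $\text{min}_{x\rightarrow y}(c)$ is the minimum of a set whose members all have one of the three forms, and so itself has one of the three forms. The one subtlety, which I do not expect to be a serious obstacle, is arguing that the minimum is actually achieved rather than being a non-attained infimum of such values; this follows because the set of candidate values is finite (one value from the first form, and finitely many from the second and third forms since the relevant $f_{z\rightarrow z}$ in the $\mathcal{M}-2\text{SAT}$ setting are piecewise fractional linear and hence have finitely many fixed points, and therefore finitely many attracting points). The $\pm\infty$ cases can be handled by observing that a tight path evaluating to $+\infty$ imposes no constraint and one evaluating to $-\infty$ already witnesses infeasibility, neither affecting the structural claim about finite values of the minimum.
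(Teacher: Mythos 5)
Your proof is correct and takes essentially the same approach as the paper's one-sentence argument, which simply observes that loops in $V(\Gamma)$ output attracting points and appeals to the alternation structure of tight paths. Your case analysis on the first one or two edges of the path is a faithful and more explicit expansion of that observation, and the caveats you raise about $\pm\infty$ and attainment of the minimum are reasonable points the paper's terse proof leaves implicit.
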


\begin{proof}
The values taken by a loop from $z$ to itself in $V(\Gamma)$ are attracting points of $f_{z\rightarrow z}$, so it is clear that the values of a tight path are of the form claimed. So by \Cref{lem:VCompleteness}, the value of $\text{min}_{x\rightarrow y}(c)$ is an infimum of values of the types claimed.

Since the functions in $\Gamma$ are piecewise analytic, the functions in $H^{\lceil\log_2(n)+1\rceil}(\Gamma)$ are piecewise analytic, since they are obtained from functions in $\Gamma$ by compositions and min operations.

To complete the proof, we notice the attracting points of a piecewise analytic function function form a discrete set (that is, each attracting point is separated from all other attracting points by a non-zero distance). Indeed, if $f$ is piecewise analytic and has infinitely many fixed points in arbitrarily small neighborhoods of a point $c$, then $f(x)=x$ in a neighborhood of $c$, and there are no attracting points in the interior of this neighborhood. The infimum of a discrete set is always attained, so this proves the claim. 
\end{proof}

By \Cref{lem:VCompleteness}, we can express the satisfiability of $\Gamma$ in terms of properties of the $\text{min}_{x\rightarrow y}$.

\begin{lemma}\label{lem:KeyLemma}
$\Gamma$ is satisfiable if and only if neither of the following happen:

\begin{itemize}
    \item There are literals $x$ and $y$ such that $\text{min}_{x\rightarrow y}(\text{max}(\text{range}(y)))<\text{min}(\text{range}(x))$
    \item There is a literal $x$ and a $c\in \text{range}(x)$ such that $\text{min}_{x\rightarrow -x}(-c)<c$ and $\text{min}_{-x\rightarrow x}(c)<-c$
\end{itemize}
\end{lemma}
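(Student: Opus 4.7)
The plan is to prove both directions separately. The forward direction (if $\Gamma$ is satisfiable, neither condition holds) is a soundness statement about the functions $\text{min}_{x\to y}$. The reverse direction (contrapositive: if $\Gamma$ is unsatisfiable, then one of the conditions holds) combines \Cref{lem:ResolutionCompleteness} with \Cref{lem:VCompleteness}.

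For the forward direction, fix a satisfying assignment and write $x_\star$ for the value of literal $x$ under the assignment. The key subclaim is that for every tight path $p$ from $x$ to $y$ in $V(\Gamma)$, one has $x_\star \le p(y_\star)$. This is proved by induction on the number of segments of $p$: each non-loop edge $u \le h(w)$ is a constraint of $\Gamma$ and hence $u_\star \le h(w_\star)$; for an infinite-composition loop $\ell^\infty$ at a literal $z$, the constraint $z \le \ell(z)$ in $H^{\lceil\log_2 n + 1\rceil}(\Gamma)$ gives $z_\star \le \ell(z_\star)$, and \Cref{lem:LimitValidity} upgrades this to $z_\star \le \ell^\infty(z_\star)$. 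Monotonicity of each segment then carries the inequality through the composition. Taking the infimum over tight paths gives $x_\star \le \text{min}_{x\to y}(y_\star)$. If condition 1 held, then combining this with $y_\star \le \max(\text{range}(y))$ (plus monotonicity) and $x_\star \ge \min(\text{range}(x))$ contradicts $\text{min}_{x\to y}(\max(\text{range}(y))) < \min(\text{range}(x))$; and for condition 2, setting $c = x_\star \in \text{range}(x)$ contradicts $\text{min}_{x\to -x}(-x_\star) \ge x_\star$.

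For the contrapositive of the reverse direction, assume $\Gamma$ is unsatisfiable. \Cref{lem:ResolutionCompleteness} produces a refutation whose final step is one of the two continuous inference rules that produce $\emptyset$. If it uses a derived constraint $x \le f(y)$ with $f(\max(\text{range}(y))) < \min(\text{range}(x))$, then \Cref{lem:VCompleteness} gives $\text{min}_{x\to y}(\max(\text{range}(y))) \le f(\max(\text{range}(y))) < \min(\text{range}(x))$, which is condition 1. Otherwise the final step uses derived constraints $x \le f(-x)$ and $-x \le g(x)$ together with a witness $c$ satisfying $f(-c) < c$ and $g(c) < -c$. When $c \in \text{range}(x)$, \Cref{lem:VCompleteness} immediately yields $\text{min}_{x\to -x}(-c) \le f(-c) < c$ and $\text{min}_{-x\to x}(c) \le g(c) < -c$, which is condition 2.

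The main point of care is when the witness $c$ falls outside $\text{range}(x)$. If $c > \max(\text{range}(x))$ then $g(\max(\text{range}(x))) \le g(c) < -c < -\max(\text{range}(x)) = \min(\text{range}(-x))$, and applying \Cref{lem:VCompleteness} to $-x \le g(x)$ gives $\text{min}_{-x\to x}(\max(\text{range}(x))) < \min(\text{range}(-x))$, which is condition 1 for the literal pair $(-x, x)$. Symmetrically, $c < \min(\text{range}(x))$ yields condition 1 for $(x, -x)$ via the constraint $x \le f(-x)$, using $f(\max(\text{range}(-x))) = f(-\min(\text{range}(x))) \le f(-c) < c < \min(\text{range}(x))$. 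I expect this out-of-range case analysis to be the main bookkeeping obstacle; everything else reduces cleanly to \Cref{lem:VCompleteness} and \Cref{lem:LimitValidity}.
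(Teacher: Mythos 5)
Your proof takes the same structural route as the paper (forward direction via soundness of the tight-path constraints, reverse direction via \Cref{lem:ResolutionCompleteness} and \Cref{lem:VCompleteness}), but it is substantially more detailed, and your treatment of the out-of-range witness $c$ in the reverse direction is a genuine and correct addition: the paper's proof is terse and does not address the case where the witness in the third continuous inference rule lies outside $\text{range}(x)$, so reducing that case to condition 1 for the pair $(-x,x)$ or $(x,-x)$ is exactly the right fix and is needed for the statement to be literally true.

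There is, however, a gap in your forward direction for condition 2. You write that ``setting $c = x_\star$ contradicts $\text{min}_{x\to -x}(-x_\star)\ge x_\star$,'' but condition 2 asserts the existence of \emph{some} $c\in\text{range}(x)$ with both inequalities, so refuting it requires showing the conjunction fails for \emph{every} $c\in\text{range}(x)$, not just for $c=x_\star$. The fix is a short case split using monotonicity of $\text{min}_{x\to -x}$ and $\text{min}_{-x\to x}$. If $c\le x_\star$, then $-c\ge -x_\star$, so $\text{min}_{x\to -x}(-c)\ge \text{min}_{x\to -x}(-x_\star)\ge x_\star\ge c$, contradicting $\text{min}_{x\to -x}(-c)<c$. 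If $c>x_\star$, then $\text{min}_{-x\to x}(c)\ge \text{min}_{-x\to x}(x_\star)\ge -x_\star>-c$, contradicting $\text{min}_{-x\to x}(c)<-c$. With this inserted, your forward direction is complete and the rest of the argument is sound.
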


\begin{proof}
If $\Gamma$ is satisfiable, then $V(\Gamma)$ is satisfiable by \Cref{lem:HValidity,lem:LimitValidity}. Each of the cases implies that there is a derivation of $\emptyset$ from $V(\Gamma)$, which can't happen by \Cref{lem:InferenceValidity}.

Suppose $\Gamma$ is unsatisfiable. By \Cref{lem:ResolutionCompleteness}, there is a derivation of $\emptyset$ from $\Gamma$. For every constraint of form $x\le f(y)$ that can be derived from $\Gamma$, the function $f$ is the resolution of a path in $\Gamma$. By \Cref{lem:VCompleteness}, these paths correspond to tight paths $p$ or $p$ and $q$ in $V(\Gamma)$ satisfying one of the conditions.
\end{proof}

\section{Algorithms for $\mathcal{M}-2\text{SAT}$}\label{sec:algorithms}

\begin{theorem}\label{thm:npmembership}
$\mathcal{M}-2\text{SAT}$ is in NP
\end{theorem}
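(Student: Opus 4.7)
The plan is to show that every satisfiable instance of $\mathcal{M}-2\text{SAT}$ admits a satisfying assignment in which each variable takes a value of the form $p + q\sqrt{r}$ with $p, q, r \in \mathbb{Q}$ of polynomially bounded bit-complexity. Such an assignment will serve as a polynomial-size NP certificate, because verifying a single constraint $x \le f(y)$ reduces, after cross-multiplying and selecting the relevant piecewise branch, to a sign check on a polynomial expression in $\mathbb{Q}(\sqrt{r_i}, \sqrt{r_j})$, which can be done in polynomial time.

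First I would use \Cref{lem:KeyLemma} as the satisfiability criterion, together with \Cref{lem:minvalues} to understand the critical values of $\min_{x \to y}$. Fixing a variable $v_k$ to a candidate value $c$ corresponds to shrinking $\text{range}(v_k)$ to $\{c\}$, so the set of good $c$ for which this shrinking preserves satisfiability is a finite union of intervals whose endpoints are precisely the values at which one of the two conditions in \Cref{lem:KeyLemma} just barely fails. By \Cref{lem:minvalues} each such endpoint is either an attracting fixed point of some loop $f_{z \to z}$, a direct evaluation $f_{x \to y}(c')$, or $f_{x \to z}(d)$ at an attracting point $d$ of some $f_{z \to z}$.

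Next I would bound the bit-complexity of these critical values. Nondeterministically guessing a single piecewise branch in each original constraint and one achieving term in each minimum appearing in the $\lceil \log_2(n)+1 \rceil$-depth circuit $H^{\lceil \log_2(n)+1 \rceil}(\Gamma)$ collapses each relevant $f_{x \to y}$ to a composition of at most $O(n)$ integer-coefficient fractional-linear maps. Such a composition is itself fractional-linear, with coefficients whose bit-complexity grows only linearly with the number of compositions (since composition corresponds to $2 \times 2$ integer matrix multiplication). Its attracting fixed points are roots of a single quadratic $c' x^2 + (d' - a')x - b' = 0$ with polynomially bounded coefficients, and so lie in $\mathbb{Q}(\sqrt{r})$ with $r$ of polynomial bit-complexity; the same holds for values of the form $f_{x \to z}(d)$, since evaluating a polynomial-bit fractional-linear function at a $p + q\sqrt{r}$ input produces another element of $\mathbb{Q}(\sqrt{r})$ of comparable complexity.

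The main obstacle is handling all variables simultaneously: naively fixing $v_1$, then $v_2$, and so on would push the effective coefficients into an unbounded tower of quadratic extensions, breaking the clean $p + q\sqrt{r}$ form. I would sidestep this by guessing all branch and min-choice selections for the full circuit from \Cref{sec:logdepth} up front, converting satisfiability into a system of fractional-linear conditions whose solution can be read off variable-by-variable as independent quadratic fixed points, each individually representable as $p_i + q_i \sqrt{r_i}$ with distinct $r_i$ allowed. The $\lceil \log_2(n)+1 \rceil$ circuit depth ensures that even after composition the polynomials involved have degree $O(n)$ and bit-complexity polynomial in the input, bounding the certificate size; the quasi-polynomial algorithm for $\mathcal{M}-2\text{SAT}$ then arises from deterministically enumerating these branch choices rather than guessing them.
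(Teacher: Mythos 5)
Your high-level plan matches the paper's: invoke \Cref{lem:KeyLemma} and \Cref{lem:minvalues}, show that critical values are quadratic irrationals with polynomial bit-complexity, and use a $p+q\sqrt{r}$ assignment as the NP certificate. Your bit-complexity argument (fractional-linear composition as $2\times 2$ integer matrix multiplication, attracting fixed points as quadratic roots) is also correct and is essentially what the paper does.

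However, the ``main obstacle'' you identify is a phantom, and your proposed sidestep has a real gap. You worry that the naive iterative approach (fix $v_1$, then $v_2$, \dots) pushes the values into an unbounded tower of quadratic extensions. That doesn't happen, and the paper proves NP-membership precisely via this iterative approach. The reason: once $v_1 = p_1 + q_1\sqrt{r_1}$ is fixed, the candidate endpoints for $v_2$ are, by \Cref{lem:minvalues}, of one of three types, and in each case the value lies in a \emph{single} quadratic extension $\qq(\sqrt{r})$ rather than a nested one. Attracting fixed points of $f_{z\to z}$ are roots of rational-coefficient quadratics and do not depend on the already-fixed variables at all; values of the form $f_{v_2\to z}(d)$ or $f_{v_2\to v_1}(\text{max}(\text{range}(v_1)))$ are obtained by applying a rational-coefficient fractional-linear map either to a fresh quadratic irrational $d\in\qq(\sqrt{r_z})$ or to $\pm(p_1 + q_1\sqrt{r_1})\in\qq(\sqrt{r_1})$, and such a map sends $\qq(\sqrt{r})$ into itself. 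So each variable ends up in $\qq(\sqrt{r_i})$ for its own $r_i$; towers never form, and the $p+q\sqrt{r}$ form is preserved inductively.

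Your replacement strategy --- guess all branch and min-choice selections up front and then ``read off'' the solution ``variable-by-variable as independent quadratic fixed points'' --- is not justified as stated. The variables are not independent: whether $v_2=c$ extends to a satisfying assignment depends on the (already-shrunk) range of $v_1$, and the candidate value of $v_2$ is in general not a fixed point of anything but an image under $\text{min}_{v_2\to v_1}$ of a range endpoint that itself depends on the earlier choice. You would have to explain how the guessed branch data alone determines a consistent assignment of all variables, and as written the proposal does not do this. In effect you have traded the straightforward iterative argument (which works once one notices there is no nesting) for a vaguer one. I'd recommend dropping the ``sidestep'' and instead arguing directly, as the paper does, that applying a rational fractional-linear map preserves the form $p+q\sqrt{r}$, so the iterative shrinking of ranges keeps every variable inside some $\qq(\sqrt{r_i})$ with polynomially bounded $p_i,q_i,r_i$.
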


\begin{proof}
Let $\Gamma$ be an instance of $\mathcal{M}-2\text{SAT}$ with $n$ variables where the each constraint in $\Gamma$ is represented with at most $\beta$ bits. Suppose that $\Gamma$ is satisfiable. First, we compute the set of values of $v_1$ that can be extended to satisfying assignments. 

The function $\text{min}_{x\rightarrow y}$ only depend on the functions appearing in $\Gamma$ and not on the ranges $\text{range}(v_i)$. Since $\Gamma$ is satisfiable, the second case in \Cref{lem:KeyLemma} never happens. So by \Cref{lem:KeyLemma} the problem has a satisfying assignment with $v_1=c$ if and only if:

\begin{itemize}
    \item $\text{min}_{-v_1\rightarrow x}(\text{max}(\text{range}(x)))\ge -c$ and $\text{min}_{v_1\rightarrow x}(\text{max}(\text{range}(x)))\ge c$ for each $x$ a literal other than $v_1$ or $-v_1$
    \item $\text{min}_{v_1\rightarrow -v_1}(-c)\ge c$ and $\text{min}_{-v_1\rightarrow v_1}(c)\ge -c$
    \item $\text{min}_{v_1\rightarrow v_1}(c)\ge c$ (equivalently $\text{min}_{-v_1\rightarrow -v_1}(-c)\ge -c$)
\end{itemize}

Note that we don't need to check e.g. $\text{min}_{x\rightarrow v_1}(c)\ge \text{min}(\text{range}(x))$ because this is equivalent to $\text{min}_{-v_1\rightarrow -x}(\text{max}(\text{range}(-x)))\ge -c$.

The set of values of $c$ satisfying all these conditions is a union of closed intervals, where each endpoint of one of the intervals is one of:

\begin{itemize}
    \item $-\text{min}_{-v_1\rightarrow x}(\text{max}(\text{range}(x)))$ for $x$ a literal other than $x$
    \item $\text{min}_{v_1\rightarrow x}(\text{max}(\text{range}(x)))$ for $x$ a literal other than $x$
    \item A point where $\text{min}_{v_1\rightarrow -v_1}(-c)=c$
    \item A point where $\text{min}_{-v_1\rightarrow v_1}(c)=-c$
    \item A locally minimal or locally maximal fixed point of $\text{min}_{v_1\rightarrow v_1}$
\end{itemize}

The constraints in $\Gamma$ are piecewise fractional linear, that is they are piecewise of form:

\[\frac{ax+b}{cx+d}\]

\noindent where $a, b, c$ and $d$ are integers with at most $\beta$ bits each. 

Taking the minimum of some of these constraints increases only the number of pieces and not the complexity of the pieces themselves. Composing two functions of this form causes the bit-complexity of the coefficients to grow by at most a factor of $3$, since:

\[\frac{a_1\left(\frac{a_2x+b_2}{c_2x+d_2}\right)+b_1}{c_1\left(\frac{a_2x+b_2}{c_2x+d_2}\right)+d_1}=\frac{(a_1a_2+b_1c_2)x+(a_1b_2+b_1d_2)}{(c_1a_2+d_1c_2)x+(c_1b_2+d_1d_2)}\]

In order to make $\Gamma$ symmetric, we also need to compute $\widetilde{f}$ for each constraint $x\le f(y)$ in $\Gamma$. The inverse of:

\[f(x)=\frac{ax+b}{cx+d}\]

\noindent is:

\[f^{-1}(x)=\frac{dx-b}{-cx+d}\]

\noindent so computing inverses doesn't increase the bit complexity of a fractional linear function.

So $H(\Gamma)$ has constraints that are piecewise fractional-linear with coefficients having at most $3\beta$ bits each. So $H^{\lceil\log_2(n)+1\rceil}(\Gamma)$ has constraints that are piecewise of the same form, where the coefficients are integers with at most $\mathcal{O}(n^{\log_2(3)}\beta)$ bits. Since the constraints should be continuous, the break points between two pieces are coincidence points of the functions on those pieces.

By \Cref{lem:minvalues}, the value of $\text{min}_{v_1\rightarrow x}(\text{max}(\text{range}(x)))$ is either $f_{v_1\rightarrow x}$, an attracting point of $f_{v_1\rightarrow v_1}$, or $f_{v_1\rightarrow z}(d)$ for some literal $z\ne v_1$ and $d$ an attracting point of $f_{z\rightarrow z}$. The functions $f_{x\rightarrow y}$ are just the functions that appear in $H^{\lceil\log_2(n)+1\rceil}(\Gamma)$, so are piecewise fractional-linear by the above discussion. 

The equation:

\[\frac{ax+b}{cx+d}=x\]

\noindent yields a quadratic equation:

\[cx^2+(d-a)x-b=0\]

\noindent after clearing denominators. So a fractional-linear function is either the identity function $f(x)=x$ or has at most $2$ fixed points. In particular, each attracting point of a piecewise fractional-linear function is an isolated fixed point of one of the pieces. So the attracting points of each $f_{x\rightarrow y}$ are of each form $p+q\sqrt{r}$, where $p$, $q$, and $r$ are integers with a number of bits at most polynomial in $n$ and $\beta$. 

Applying a factional linear map to a number of form $p+q\sqrt{r}$ yields a number of the same form, indeed:

\[\frac{a(p+q\sqrt{r})+b}{c(p+q\sqrt{r})+d}=\frac{((ap+b)+aq\sqrt{r})(cp+d-cq\sqrt{r})}{(cp+d)^2+rq^2c^2}=\frac{(ap+b)(cp+d)-acq^2r}{(cp+d)^2+rq^2c^2}+\frac{(ad-bc)}{(cp+d)^2+rq^2c^2}q\sqrt{r}\]

Since the endpoints of the interval $\text{range}(x)$ are rational, we conclude that $\text{min}_{v_1\rightarrow x}(\text{max}(\text{range}(x)))$ is a number of form $p+q\sqrt{r}$ where $p$, $q$, and $r$ have at most polynomially many bits. The same is true of locally extremal fixed points of $\text{min}_{v_1\rightarrow v_2}$ and points where $\text{min}_{v_1\rightarrow -v_1}(-c)=c$ or $\text{min}_{-v_1\rightarrow v_1}(c)=-c$ (note that $\text{min}_{v_1\rightarrow -v_1}$ is increasing, so there is at most one point where $\text{min}_{v_1\rightarrow -v_1}(-c)=c$). In particular, we can choose a value of $v_1$ that has this form where $v_1$ extends to a satisfying assignment of $\Gamma$.

We now repeat this process to set values of $v_2, \dots, v_n$. The difference now is that $\text{range}(v_1)$ may no longer be bounded by rational numbers, since we might have set $v_1$ to an irrational value. But this is fine, because applying $\text{min}_{v_2\rightarrow x}$ to a number of form $p+q\sqrt{r}$ still yields a number of the same form. Inductively, we conclude that if $\Gamma$ has a satisfying assignment, then it has an assignment where are variables have form $p+q\sqrt{r}$, where $p$, $q$, and $r$ are rational numbers with at most $\mathcal{O}\left(\beta n^{2\log_2(3)}\right)$ bits. 

Given an assignment of variables of this form, checking a constraint in $\Gamma$ requires at most a constant number of arithmetic operations with numbers that roots of polynomials of degree at most $2$. These arithmetic operations can be performed in polynomial time (see e.g. Mishra and Pedersen \cite{IrrationalArithmetic}), so a solution can be checked in polynomial time, proving that $\mathcal{M}-2\text{SAT}$ is in NP.
\end{proof}

By \Cref{thm:BBreduction}, this also proves that the boundary-boundary art-gallery problem is in NP. 

The boundary-boundary art-gallery problem is NP-hard, but the reduction in \Cref{thm:BBreduction} is nondeterministic, so it doesn't imply anything about the complexity of $\mathcal{M}-2\text{SAT}$. We show that there is a quasi-polynomial-time algorithm for $\mathcal{M}-2\text{SAT}$, so it is not NP-hard unless the exponential-time hypothesis fails.

\begin{theorem}\label{thm:quasipoly}
There is a quasi-polynomial-time algorithm for $\mathcal{M}-2\text{SAT}$
\end{theorem}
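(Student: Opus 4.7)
The plan is to explicitly construct $V(\Gamma)$ in time $n^{\mathcal{O}(\log n)}$ and then decide satisfiability by checking the two conditions of \Cref{lem:KeyLemma} directly. Since \Cref{sec:logdepth} has already shown that $V(\Gamma)$ encodes everything needed, the bulk of the work lies in showing that its construction fits within a quasi-polynomial time budget.

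First I would compute $H^k(\Gamma)$ iteratively for $k = 0, 1, \ldots, \lceil \log_2(n) + 1\rceil$. At each step, for each of the $O(n^2)$ ordered pairs $(x, y)$ of literals, we store at most one piecewise fractional-linear constraint $f^{(k)}_{x \to y}$, so the min step in the definition of $H$ keeps the number of \emph{constraints} under control. The update from $H^k$ to $H^{k+1}$ requires, for each such pair, composing $f^{(k)}_{x \to z}$ with $f^{(k)}_{z \to y}$ over all $2n$ intermediate literals $z$ and then taking the pointwise minimum. Composing two piecewise fractional-linear functions with $p$ and $q$ pieces produces a function with at most $O(p + q)$ pieces, since every breakpoint of the composition is either a breakpoint of the inner function or the preimage under it of a breakpoint of the outer function. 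The pointwise minimum of $n$ such functions with total piece count $N$ has $O(N \alpha(N))$ pieces by the Davenport–Schinzel bound for sequences of order $3$, which applies because any two fractional-linear pieces agree at most twice (setting them equal yields a quadratic). Combined with the bit-complexity analysis from the proof of \Cref{thm:npmembership}, this gives $n^{\mathcal{O}(k)}$ pieces per constraint after $k$ iterations, with coefficients of polynomial bit-complexity, so $H^{\lceil\log_2(n)+1\rceil}(\Gamma)$ has total description size $n^{\mathcal{O}(\log n)}$.

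Next I would produce $V(\Gamma)$ by replacing each loop $x \le f_{x \to x}(x)$ by $x \le f_{x \to x}^{\infty}(x)$. On each piece of $f_{x \to x}$, the fixed points are roots of a quadratic with coefficients of polynomial bit-complexity, so they can be written as $p + q\sqrt{r}$; a single left-to-right scan through the pieces then identifies the locally maximal fixed points and produces $f^{\infty}$ as a piecewise constant function with only polynomial overhead per loop. Finally I would decide satisfiability using \Cref{lem:KeyLemma}. For the first condition, \Cref{lem:minvalues} expresses each $\text{min}_{x \to y}(\text{max}(\text{range}(y)))$ as the minimum of $O(n)$ explicit candidates, all of the form $p + q\sqrt{r}$. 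For the second condition, both sides are monotone piecewise functions of $c$, so it suffices to test the relevant inequalities at the $n^{\mathcal{O}(\log n)}$ breakpoints of the combined piecewise structure. Each comparison of numbers of the form $p + q\sqrt{r}$ can be carried out in polynomial time.

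The main obstacle is the piece-count blow-up during the $H$ iteration: a naive counting argument for the minimum operation gives quadratic growth per step and hence an exponential piece count after $\log_2 n$ iterations. Keeping the construction quasi-polynomial depends crucially on the Davenport–Schinzel bound on lower envelopes of fractional-linear pieces, which exploits the degree-$2$ intersection structure. Beyond this, the remainder is bookkeeping with numbers of the form $p + q\sqrt{r}$ using the closure properties already established in \Cref{thm:npmembership}.
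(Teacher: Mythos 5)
Your high-level outline (compute $H^{\lceil\log_2 n+1\rceil}(\Gamma)$, form $V(\Gamma)$, check the conditions of \Cref{lem:KeyLemma}) matches the paper, and the piece-count bookkeeping is essentially fine — you use the Davenport--Schinzel bound where the paper uses the cruder $(2j^3+j)k$ bound for minimum of $j$ functions (the paper notes your bound as an improvement at the end), and both are compatible with a quasi-polynomial budget. The bit-complexity reasoning for $p+q\sqrt{r}$ is also correct.

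However, there is a genuine gap in the step where you decide satisfiability. You write that \Cref{lem:minvalues} ``expresses each $\text{min}_{x \to y}(\text{max}(\text{range}(y)))$ as the minimum of $O(n)$ explicit candidates,'' but that is not what the lemma says: it only says the value of $\text{min}_{x\to y}(c)$ \emph{belongs to} a candidate set (namely $f_{x\to y}(c)$, attracting points of $f_{x\to x}$, and $f_{x\to z}(d)$ with $d$ an attracting point of $f_{z\to z}$). It does not say the minimum over that candidate set equals $\text{min}_{x\to y}(c)$. In general the naive minimum over all candidates is a strict lower bound: a candidate $f_{x\to z}(d)$ contributes only if there actually exists a tight path from $x$ to $y$ passing through the loop at $z$ that produces exactly the attracting point $d$ when the tail of the path is evaluated at $\text{max}(\text{range}(y))$. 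Whether such a tight path exists is a reachability question, and resolving it is precisely what the paper does with its auxiliary directed graph $G$ on vertices $(z, d)$ (one vertex per attracting point of $f_{z\to z}$), with edges $(z,c) \to (w,d)$ whenever $f_{z\to z}^\infty(f_{z\to w}(d))=c$, together with the quantity $G(x\to(y,d))$ and the explicit formula for $\text{min}_{x\to y}(c)$ in terms of it. Without this reachability analysis your algorithm would sometimes compute a value strictly smaller than the true $\text{min}_{x\to y}$, which in the framework of \Cref{lem:KeyLemma} would cause satisfiable instances to be rejected. This graph-and-reachability construction, along with the inductive argument that tight paths can always be summarized this way, is the central remaining idea needed to complete the proof.
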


\begin{proof}
Let $\Gamma$ be an instance of $\mathcal{M}-2\text{SAT}$ with $n$ variables where each constraint is a piecewise fractional-linear function where the coefficients have at most $\beta$ bits each. For each pair of literals $x$ and $y$, we can take the minimum over all constraints of form $x\le f(y)$ in $\Gamma$, so assume that $\Gamma$ has at most one constraint $x\le f(y)$ for each such pair of literals. Let $k$ be the maximum number of pieces in a constraint in $\Gamma$.

As shown in the proof of \Cref{thm:npmembership}, the functions $f_{x\rightarrow y}$ are piecewise fractional linear where the coefficients of each piece are integers with at most $\mathcal{O}(n^{\log_2(3)}\beta)$ bits each. We also need a bound on the number of pieces. If $f$ and $g$ are monotone fractional-linear functions with at most $k$ pieces each, then the composition $f\circ g$ has at most $2k$ pieces. 

If $f_1, \dots, f_j$ are piecewise-fractional-linear functions with at most $k$ pieces each, then $\text{min}_i(f_i)$ has at most $(2j^3+j)k$ pieces. To see this, split $\rr$ at the endpoint of every piece in $f_i$. This splits $\rr$ into at most $j(k-1)+1$ chunks such that each $f_i$ restricts to a single piece on each chunk. Two fractional linear functions intersect in at most two places, so the number of intersection points on each of these chunks is at most $2j^2$.

Each application of $H$ involves computing minimums of sets of at most $2n$ functions, so each constraint in $H^{\lceil\log_2(n)+1\rceil}(\Gamma)$ has at most $n^{\mathcal{O}(\log(n))}k$ pieces, and computing all the functions $f_{x\rightarrow y}$ takes time $n^{\mathcal{O}(\log(n))}k\beta$. 

In order to decide if $\Gamma$ is decidable, we need to compute the functions $\text{min}_{x\rightarrow y}$, which correspond to tight paths in $V(\Gamma)$. Each loop in $V(\Gamma)$ is the infinite composition of some $f_{x\rightarrow x}$. The number of values taken by $f_{x\rightarrow x}^{\infty}$ is at most twice the number of pieces in $f$.

Let $G$ be a directed graph that has a vertex $(x, c)$ whenever $x$ is a literal in $\Gamma$ and $c$ is a value that can be taken by $f_{x\rightarrow x}^{\infty}$. $G$ has an edge from $(x, c)$ to $(y, d)$ whenever $f_{x\rightarrow x}^{\infty}(f_{x\rightarrow y}(d))=c$. For every pair of a literal $x$ and a vertex $(y, c)$ of $G$, define:

\[G(x\rightarrow (y, d))=\text{min}\{c : (x, c)\text{ is a vertex of }G \text{ and there is a path from }(x, c)\text{ to }(y, d)\text{ in }G\}\]

We now claim that:

\[\text{min}_{x\rightarrow y}(c)=\text{min}\{\text{id}, f_{x\rightarrow x}^{\infty}\}\circ \text{min}\left(\{f_{x\rightarrow y}(c)\}\cup\{f_{x\rightarrow z}(G(z\rightarrow (w, (f_{w\rightarrow w}^{\infty}\circ f_{w\rightarrow y})(c))) : z\text{ a literal}\}\right)\]

To see this, let $p$ be a tight path from $x$ to $y$ in $V(\Gamma)$. For each $c\in \rr$, we want to show that $p(c)$ is one of $f_{x\rightarrow y}(c)$, $(f_{x\rightarrow x}^{\infty}\circ f_{x\rightarrow y})(c)$, $f_{x\rightarrow z}(d)$, or $(f_{x\rightarrow x}^{\infty}\circ f_{x\rightarrow z})(d)$, for some literal $z$ and $d\in \rr$ such that there is a literal $w$ and path from $(z, d)$ to $(w, (f_{w\rightarrow w}^{\infty}\circ f_{w\rightarrow y})(c))$ in $G$. This is by induction on the length of $p$. 

If $p$ has length $1$, then it must be exactly $f_{x\rightarrow y}$. If $p$ is longer, then there are two cases based on whether the first edge in $p$ is a loop.

First, suppose we can write $p=f_{x\rightarrow x}^{\infty}\circ q$ for $q$ a tight path from $x$ to $y$. Since $q$ is shorter than $p$ and $f_{x\rightarrow x}\circ f_{x\rightarrow x}=f_{x\rightarrow x}$, we have that $p$ is has one of the required forms.

Otherwise, we can write $p=f_{x\rightarrow z}\circ f_{z\rightarrow z}\circ q$ for $z\ne x$ and $q$ a tight path from $z$ to $y$. By induction, $q$ has one of the types above. Since $f_{z\rightarrow z}^{\infty}\circ f_{z\rightarrow z}^{\infty}=f_{z\rightarrow z}^{\infty}$, we have that $p(c)$ is either $(f_{x\rightarrow z}\circ f_{z\rightarrow z}^{\infty}\circ f_{x\rightarrow y})(c)$ or $(f_{x\rightarrow z}\circ f_{z\rightarrow z}^{\infty}\circ f_{z\rightarrow z'})(d)$ where there is a path from $(z', d)$ to $(w, (f_{w\rightarrow w}^{\infty}\circ f_{w\rightarrow y})(c))$ in $G$.

There is a path from $(z, (f_{z\rightarrow z}^{\infty}\circ f_{x\rightarrow y})(c))$ to itself in $G$, so the first case can be written $f_{x\rightarrow z}(d)$ with $d$, $z$ satisfying the appropriate conditions. There is an edge from $(z, (f_{z\rightarrow z}^{\infty}\circ f_{z\rightarrow z'})(d))$ to $(z', d)$ in $G$, so there is a path from $(z, (f_{z\rightarrow z}^{\infty}\circ f_{z\rightarrow z'})(d))$ to $(w, (f_{w\rightarrow w}^{\infty}\circ f_{w\rightarrow y})(c))$ in $G$. So the second case also has the appropriate form.

For each of $f_{x\rightarrow y}(c)$, $(f_{x\rightarrow x}^{\infty}\circ f_{x\rightarrow y})(c)$, $f_{x\rightarrow z}(d)$, and $(f_{x\rightarrow x}\circ f_{x\rightarrow z})(d)$ for $z, d$ as above, there is a tight path in $V(\Gamma)$ that evaluates to that value at $c$. This proves the claim.

We can compute $G$ in time $n^{\mathcal{O}(\log(n))}k\beta$, and so we can calculate each $G(x\rightarrow (y, d))$ in quasi-polynomial time. The above calculation shows that this lets us calculate the functions $\text{min}_{x\rightarrow y}$. In order to determine if $\Gamma$ is satisfiable, we just need to test if any of the conditions in \Cref{lem:KeyLemma} occur. Since each $\text{min}_{x\rightarrow y}$ is a piecewise-fractional-linear (though not necessarily continuous) function, this is straightforward. The entire computation runs in time $n^{\mathcal{O}(\log(n))}k\beta$.
\end{proof}

The minimum of $k$ piecewise-fractional-linear functions with $k$ pieces each is the lower envelope of an arrangement of $nk$ arcs, where each pair of arcs has at most $2$ intersection points. Results on Davenport-Schinzel sequences (see Sharir and Agarwal \cite{DSBook}) imply the number of pieces in the minimum is at most $\mathcal{O}(jk2^{\alpha(jk)})$, where $\alpha$ is the inverse Ackermann function. Unless $k$ is much much larger than $n$, this improves the bound used in the proof of \Cref{thm:quasipoly}, but the result of the theorem does not change qualitatively. 

\section{Irrational Coordinates}\label{sec:irrationalexample}

In this section, we give a sketch of a construction of an instance of the boundary-boundary art-gallery problem where the unique solution with $3$ guards requires placing the guards at irrational coordinates. The instance is illustrated in \Cref{fig:IrrationalExample}.

\begin{figure}
\centering
\includegraphics[page=1,width=\textwidth]{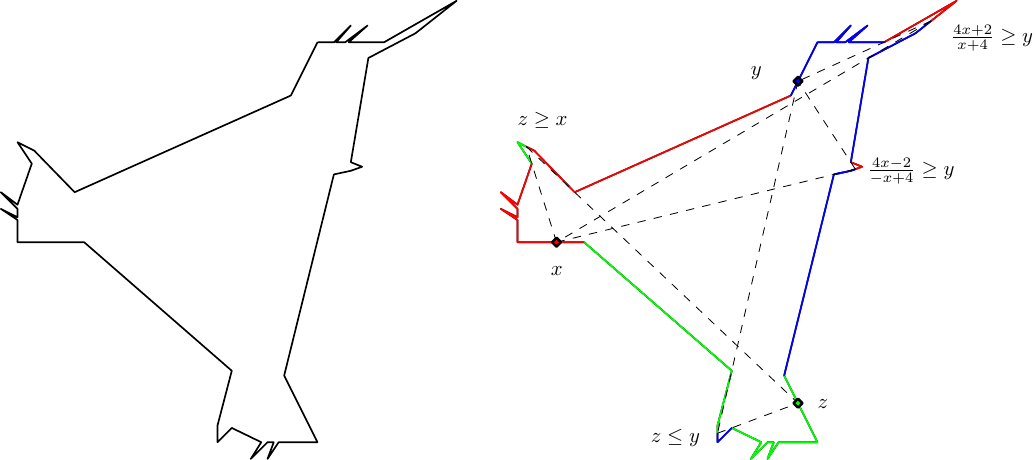}
\caption{Left: an art gallery whose boundary can be guarded with three guards on the boundary, but only if irrational coordinates are allowed. Right: guarding this example with $3$ guards. There are $4$ nooks that represent constraints between pairs of guards.}
\label{fig:IrrationalExample}
\end{figure}

The instance can be guarded with $3$ guards. The positions of the guards represent variables $x, y$ and $z$, and the $4$ nooks create constraints $z\ge x$, $y\ge z$, $\frac{4x+2}{x+4}\ge y$ and $\frac{4x-2}{-x+4}\ge y$. The constraints $z\ge x$ and $y\ge z$ imply that $y\ge x$, so the only two solutions are $x=y=z=\pm \sqrt2$, as illustrated by \Cref{fig:ConstraintPlot}. Only the solution $x=y=z=\sqrt{2}$ satisfies the range constraints created by the polygon.

\begin{figure}
\centering
\includegraphics[page=2]{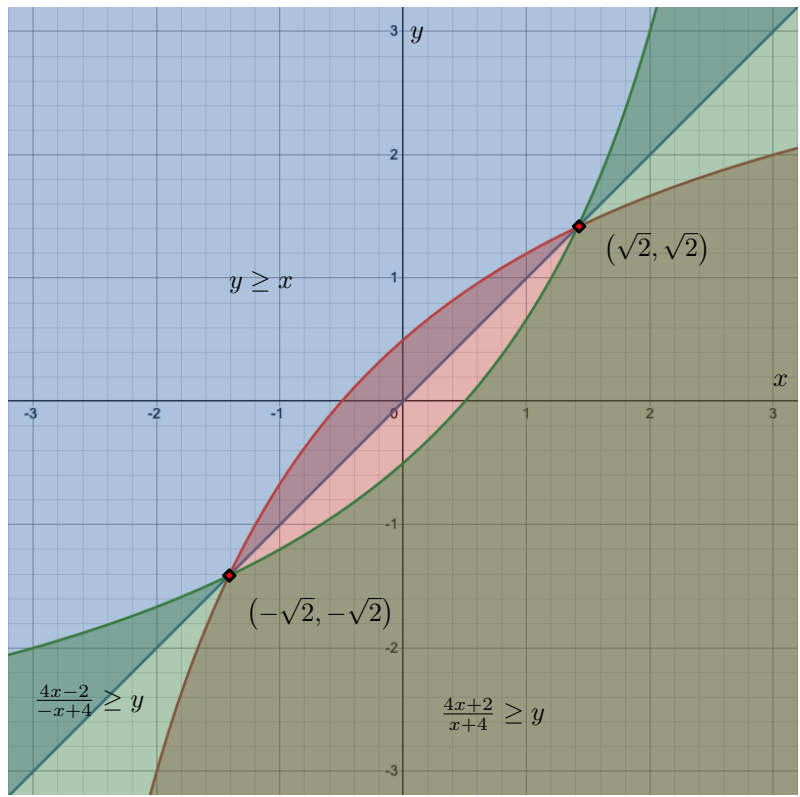}
\caption{The three regions created by constraints $y\ge x$, $\frac{4x+2}{x+4}\ge y$ and $\frac{4x-2}{-x+4}\ge y$. The only two points in all three regions are $\pm \left(\sqrt{2}, \sqrt{2}\right)$}
\label{fig:ConstraintPlot}
\end{figure}

We now give a brief sketch of the techniques used to construct this polygon. The key ideas are adapted from the construction in \cite{Stade2025}. There are $6$ small triangular slits in the polygon boundary, and the only way to guard all of the slits with $3$ guards total is to place one guard somewhere on each of the red segments shown in \Cref{fig:IrrationalStructure}. There are $4$ nooks (highlighted in blue in \Cref{fig:IrrationalStructure}) that each create a continuous constraint between two of the guard positions. By the discussion in the proof of \Cref{thm:BBreduction}, the constraint created by each nook is fractional-linear.

\begin{figure}
\centering
\includegraphics[page=3]{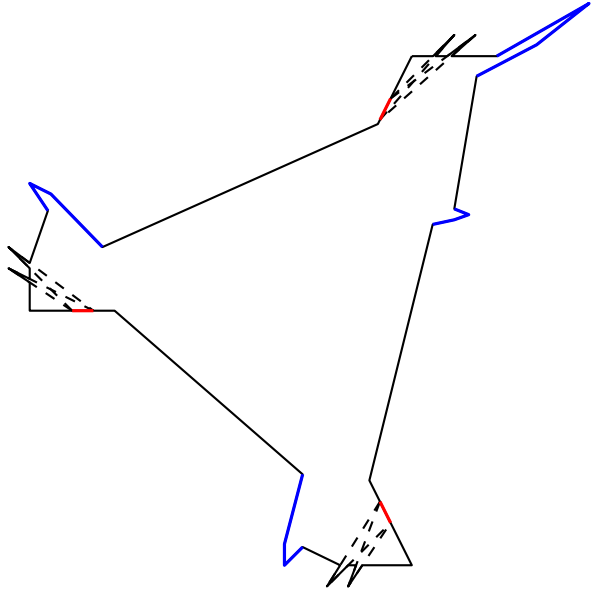}
\caption{The $6$ slits on the boundary of the polygon can only be guarded with $3$ guards if there is one guard on each of the red segments shown. The nooks that create continuous constraints are highlighted in blue.}
\label{fig:IrrationalStructure}
\end{figure}

Two fractional linear functions are identical if they agree on three different points (for $f$ and $g$ fractional linear, the identity $f(x)=g(x)$ becomes quadratic after clearing denominators). So in order to verify that a nook enforces the appropriate constraint, it is sufficient to check $3$ points. We illustrate this by a constructing a nook that enforces the constraint $\frac{4x+2}{x+4}\ge y$.

First, we choose two lines, so that points on those lines represent values of variables $x$ and $y$, as illustrated in \Cref{fig:NookConstruction1}. We choose label the points corresponding to $x=0$, $x=1$, and $x=2$. These should be mapped to $y=\frac12$, $y=\frac65$, and $y=\frac53$ by the nook. We draw a line through the points representing $x=0$ and $y=\frac12$, and choose a point $P$ on this line, as in \Cref{fig:NookConstruction1}.

\begin{figure}
\centering
\includegraphics[page=4]{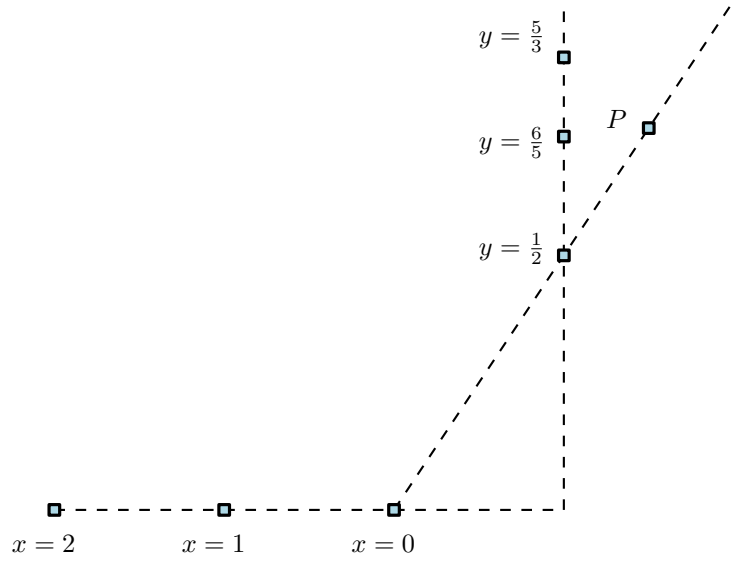}
\caption{A point on the vertical line represents a value of $y$ and a point on the horizontal line represents a value of $x$.}
\label{fig:NookConstruction1}
\end{figure}

Next, we choose a line $\ell$ through the point representing $x=0$, as illustrated in \Cref{fig:NookConstruction2}. We draw lines from the points representing $y=\frac65$ and $y=\frac53$ through $P$, and let points $I$ and $J$ respectively be the intersection points of these new lines with $\ell$. 

\begin{figure}
\centering
\includegraphics[page=5]{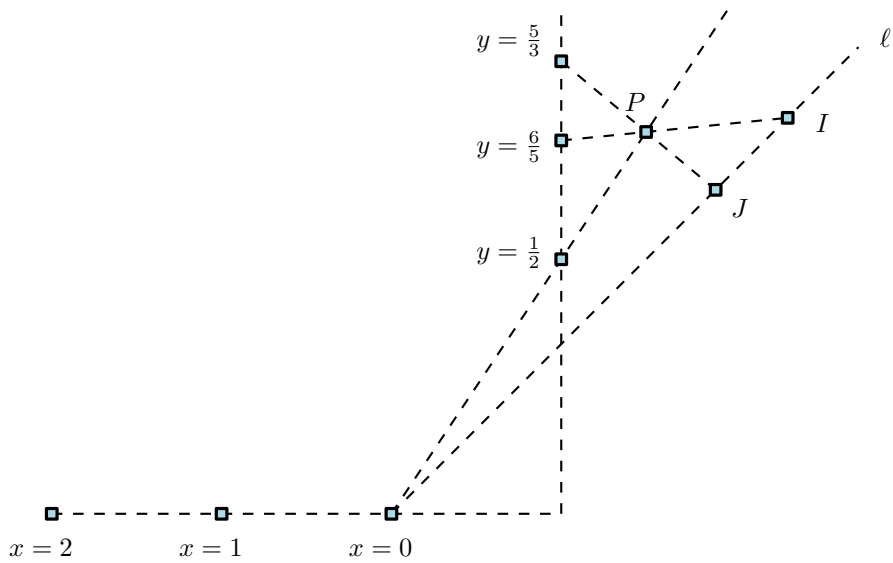}
\caption{The first step in the construction of a nook.}
\label{fig:NookConstruction2}
\end{figure}

Next, we draw a line from $I$ to $x=1$ and from $J$ to $x=2$, and let $Q$ be the intersection point of these lines, as in \Cref{fig:NookConstruction3}. We create a nook from the points $P$, $I$, $J$ and $Q$. Let $g$ be the function from the line representing $x$ to $\ell$ given by projecting through $Q$ and let $f$ be the function from $\ell$ to the line representing $y$ given by projecting through $P$. If two guards are placed at points representing values of $x$ and $y$ with $x\in [1, 2]$ and $y\in \left[\frac65, \frac53\right]$, then they together can see all of the nook if and only if $f(g(x))\le y$, as illustrated in \Cref{fig:NookConstruction4}.

\begin{figure}
\centering
\includegraphics[page=6]{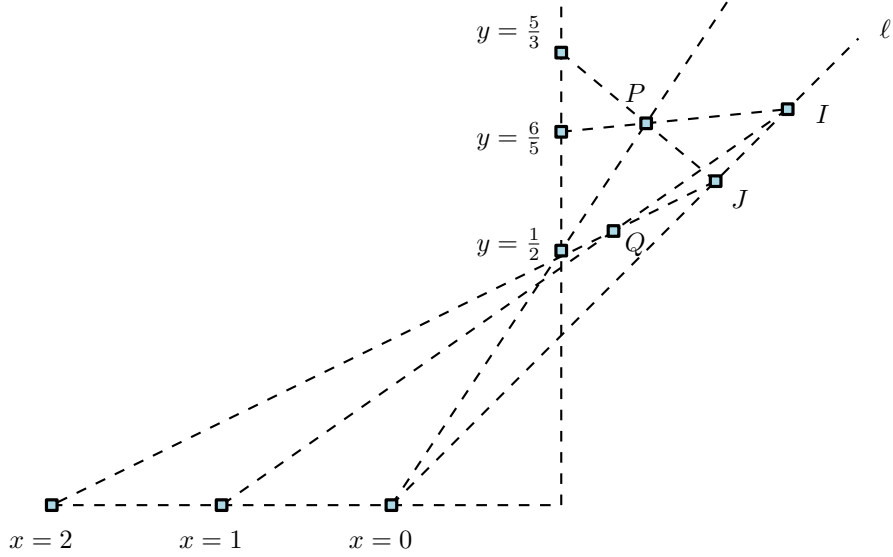}
\caption{The second step in the construction of a nook.}
\label{fig:NookConstruction3}
\end{figure}

\begin{figure}
\centering
\includegraphics[page=7]{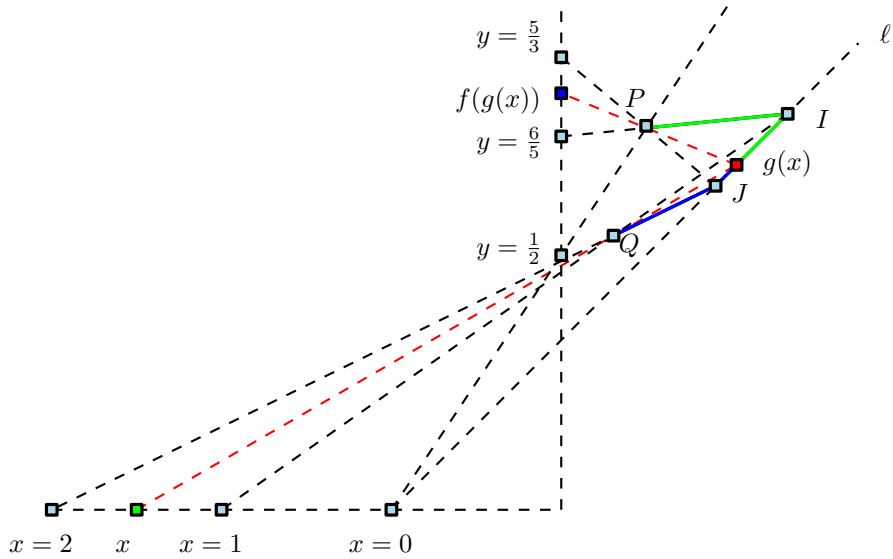}
\caption{In order to see the parts of the nook that cannot be seen by a guard at position $x$, a guard on the vertical line should be placed at a position representing a $y$ value of at most $f(g(x))$. The composition $f\circ g$ is a fractional linear function that maps $1$ to $\frac65$ and $2$ to $\frac53$ by construction. To see that it maps $0$ to $\frac12$, we notice that the line through $x=0$ and $Q$ intersects $\ell$ at the point $x=0$, and the line through $x=0$ and $P$ intersects the vertical line at $y=\frac12$ by construction. }
\label{fig:NookConstruction4}
\end{figure}

The composition $f\circ g$ is fractional linear, so in order to verify that $f(g(x))=\frac{4x+2}{x+4}$, we just need to check that $f(g(0))=\frac12$, $f(g(1))=\frac65$ and $f(g(2))=\frac53$. This can be easily seen from the figures. 

Each of the $4$ nooks in \Cref{fig:IrrationalExample} is constructed using a similar method. A detailed specification of the construction is shown in \Cref{fig:NookConstructionDetails}.

\begin{figure}
\centering
\includegraphics[page=8,width=\textwidth]{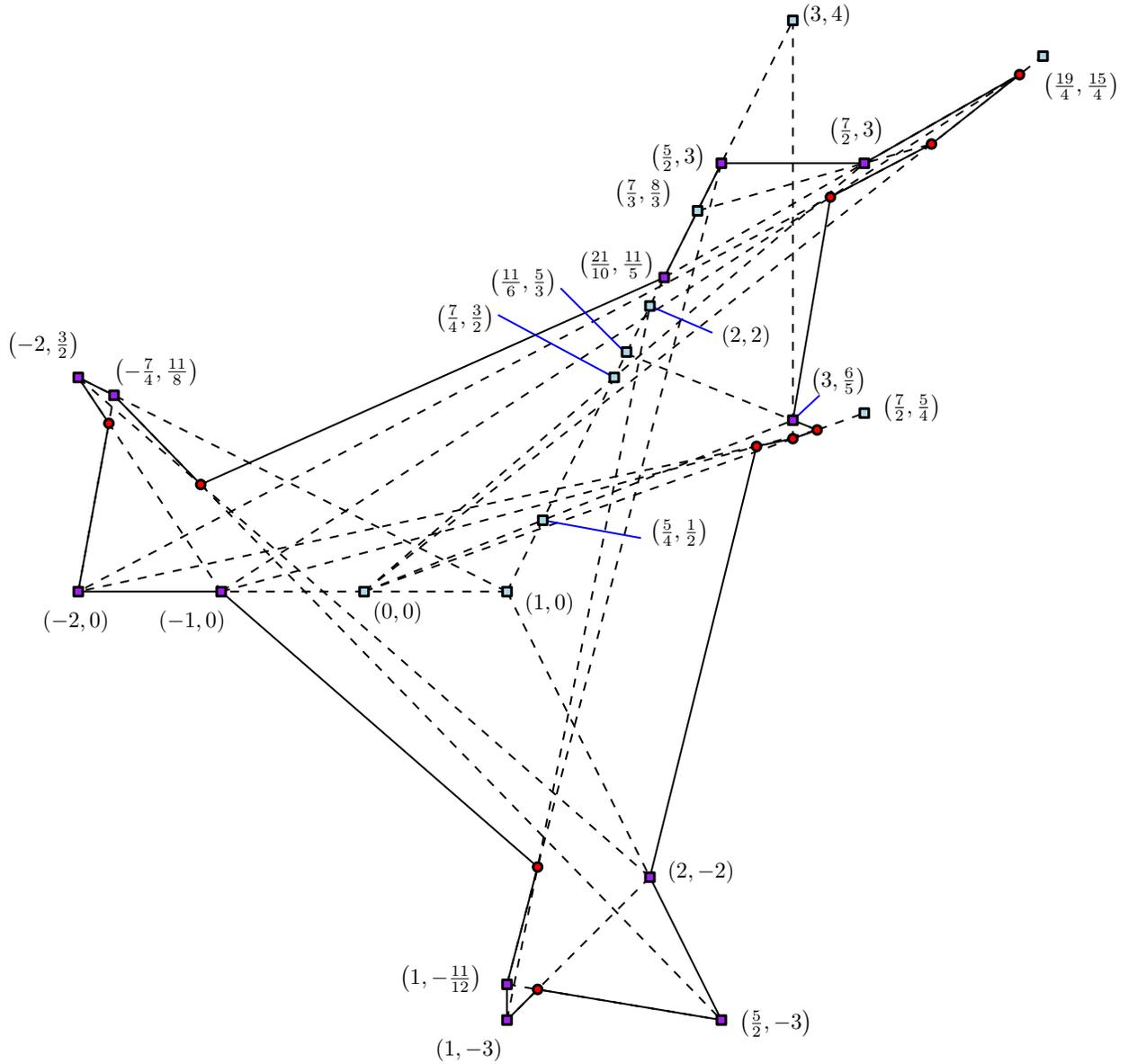}
\caption{Specification of the construction. The $6$ slits are omitted, since the exact parameters of the slits are not important. Each of the purple points is a vertex of the polygon with explicitly-given coordinates. Each of the blue points is an explicitly-given point that is used to help construct the polygon (but not a vertex of the polygon). The red points are derived vertices, which are defined as intersections of lines through some of the other points. A guard represents a value of $x$ if it is at point $(-x, 0)$, a value of $y$ if it is at $\left(\frac32+\frac12y, 1+y\right)$, or a value of $z$ if it is at point $\left(\frac32+\frac12z, -1-z\right)$.}
\label{fig:NookConstructionDetails}
\end{figure}

\section{Acknowledgments}

I am very grateful to Lucas Meijer and Tillman Miltzow for discussions that helped to produce the example in \Cref{sec:irrationalexample}, and to the anonymous reviewers for helpful feedback and comments. 

This work is supported by Starting Grant 1054-00032B from the Independent Research Fund Denmark under the Sapere Aude research career programme, and by the Carlsberg Foundation, grant CF24-1929.
The author is part of BARC, supported by VILLUM Foundation grant 16582.

\section{Conclusion}

We have shown that the boundary-boundary art-gallery problem is in NP, despite the fact that irrational coordinates are sometimes needed. Each of the X-Y art-gallery variants is now known to be either NP-complete or $\exists\rr$-complete. 

We have described new techniques for approaching 2CCSPs, in particular giving a quasi-polynomial-time algorithm for $\mathcal{M}-2\text{SAT}$ and showing that it is in NP. It is interesting to wonder if $\mathcal{M}-2\text{SAT}$ can be solved in polynomial time.

We could also ask if more complicated 2CCSPs might also be in NP. The techniques from \Cref{sec:ctsinference} can be applied to more complicated problems, but the numerical calculations required quickly become intractable in the Turing model. 

Even if we allow a model of computation that can perform more complicated arithmetic operations, it is unclear whether general 2CCSPs become easy. Our algorithms for $\mathcal{M}-2\text{SAT}$ exploit the fact that the class of fractional-linear functions is closed under composition. This seems to be in contrast to higher-degree polynomials, where the degree grows exponentially with the number of compositions. However, only a small fraction of all high-degree polynomials can be written as a composition of lower-degree polynomials, and it isn't obvious that the exponential degree increase leads to an exponential increase in combinatorial complexity. For example, it seems to be unknown whether the composition $f\circ g$ can have more than $\mathcal{O}(n)$ fixed points, when $f$ and $g$ are monotone polynomials of degree at most $n$.

\bibliographystyle{plainurl}
\bibliography{bib}

@INPROCEEDINGS{Monotone2SAT,
  author={Beckert, B. and Hahnle, R. and Manya, F.},
  booktitle={Proceedings 30th IEEE International Symposium on Multiple-Valued Logic (ISMVL 2000)}, 
  title={The 2-SAT problem of regular signed CNF formulas}, 
  year={2000},
  volume={},
  number={},
  pages={331-336},
  keywords={Logic programming;Multivalued logic;Computer science;Deductive databases;Natural languages;Knowledge representation;Costs;Application software;Polynomials;Noise measurement},
  doi={10.1109/ISMVL.2000.848640}
}

@article{Binary2SAT,
    author = {Krom, M. R.},
    title = {The Decision Problem for a Class of First-Order Formulas in Which all Disjunctions are Binary},
    journal = {Mathematical Logic Quarterly},
    volume = {13},
    number = {1-2},
    pages = {15-20},
    doi = {https://doi.org/10.1002/malq.19670130104},
    url = {https://onlinelibrary.wiley.com/doi/abs/10.1002/malq.19670130104},
    eprint = {https://onlinelibrary.wiley.com/doi/pdf/10.1002/malq.19670130104},
    year = {1967}
}

@INPROCEEDINGS{MonotoneFaster,
  author={Charatonik, Witold and Wrona, Michal},
  booktitle={37th International Symposium on Multiple-Valued Logic (ISMVL'07)}, 
  title={2-SAT Problems in Some Multi-Valued Logics Based on Finite Lattices}, 
  year={2007},
  volume={},
  number={},
  pages={21-21},
  keywords={Multivalued logic;Lattices;Algebra;Logic functions;Computer science;Cost accounting;Artificial intelligence;Application software;Polynomials;Logic design},
  doi={10.1109/ISMVL.2007.1}
}

@inproceedings{LinearOptimality,
    author = {Dadush, Daniel and Koh, Zhuan Khye and Natura, Bento and Olver, Neil and V\'{e}gh, L\'{a}szl\'{o} A.},
    title = {A Strongly Polynomial Algorithm for Linear Programs with At Most Two Nonzero Entries per Row or Column},
    year = {2024},
    isbn = {9798400703836},
    publisher = {Association for Computing Machinery},
    address = {New York, NY, USA},
    url = {https://doi.org/10.1145/3618260.3649764},
    doi = {10.1145/3618260.3649764},
    booktitle = {Proceedings of the 56th Annual ACM Symposium on Theory of Computing},
    pages = {1561–1572},
    numpages = {12},
    keywords = {circuits, generalized flow, interior point method, linear programming, strongly polynomial},
    location = {Vancouver, BC, Canada},
    series = {STOC 2024}
}

@article{LinearFeasibility,
    author = {Megiddo, Nimrod},
    title = {Towards a Genuinely Polynomial Algorithm for Linear Programming},
    journal = {SIAM Journal on Computing},
    volume = {12},
    number = {2},
    pages = {347-353},
    year = {1983},
    doi = {10.1137/0212022},
    URL = {https://doi.org/10.1137/0212022},
    eprint = {https://doi.org/10.1137/0212022},
}

@INPROCEEDINGS{CCSPClassification,
  author={Miltzow, Tillmann and Schmiermann, Reinier F.},
  booktitle={2021 IEEE 62nd Annual Symposium on Foundations of Computer Science (FOCS)}, 
  title={On Classifying Continuous Constraint Satisfaction problems}, 
  year={2022},
  volume={},
  number={},
  pages={781-791},
  keywords={Computer science;Systematics;Random access memory;Containers;Computational Geometry;Constraint Satisfaction Problems;Existential Theory of the Reals;Algorithms},
  doi={10.1109/FOCS52979.2021.00081}
}

@article{ResolutionCompleteness, 
    author = {Davis, Martin and Putnam, Hilary}, 
    title = {A Computing Procedure for Quantification Theory}, 
    year = {1960}, 
    issue_date = {July 1960}, 
    publisher = {Association for Computing Machinery}, 
    address = {New York, NY, USA}, 
    volume = {7}, 
    number = {3}, 
    issn = {0004-5411}, 
    url = {https://doi.org/10.1145/321033.321034}, 
    doi = {10.1145/321033.321034}, 
    journal = {J. ACM}, 
    month = jul, 
    pages = {201–215}, 
    numpages = {15} 
}

@book{QuantifierElimination,
    URL = {http://www.jstor.org/stable/jj.8501420},
    author = {Alfred Tarski and J. C. C. McKinsey},
    edition = {DGO - Digital original, 1},
    publisher = {University of California Press},
    title = {A Decision Method for Elementary Algebra and Geometry},
    urldate = {2025-08-21},
    year = {1951}
}

@inproceedings{ETRPSPACE, 
    author = {Canny, John}, 
    title = {Some algebraic and geometric computations in PSPACE}, 
    year = {1988}, 
    isbn = {0897912640}, 
    publisher = {Association for Computing Machinery},
    address = {New York, NY, USA}, 
    url = {https://doi.org/10.1145/62212.62257}, 
    doi = {10.1145/62212.62257},
    booktitle = {Proceedings of the Twentieth Annual ACM Symposium on Theory of Computing}, 
    pages = {460–467}, 
    numpages = {8}, 
    location = {Chicago, Illinois, USA}, 
    series = {STOC '88} 
}

@Inbook{Schaefer2013,
    author="Schaefer, Marcus",
    editor="Pach, J{\'a}nos",
    title="Realizability of Graphs and Linkages",
    bookTitle="Thirty Essays on Geometric Graph Theory",
    year="2013",
    publisher="Springer New York",
    address="New York, NY",
    pages="461--482",
    doi="10.1007/978-1-4614-0110-0_24",
    url="https://doi.org/10.1007/978-1-4614-0110-0_24"
}

@Inbook{Mnev1988,
    author="Mnev, N. E.",
    editor="Viro, Oleg Yanovich
    and Vershik, Anatoly Moiseevich",
    title="The universality theorems on the classification problem of configuration varieties and convex polytopes varieties",
    bookTitle="Topology and Geometry --- Rohlin Seminar",
    year="1988",
    publisher="Springer Berlin Heidelberg",
    address="Berlin, Heidelberg",
    pages="527--543",
    isbn="978-3-540-45958-3",
    doi="10.1007/BFb0082792",
    url="https://doi.org/10.1007/BFb0082792"
}

@inproceedings{Shor1990,
  title={Stretchability of Pseudolines is {NP}-Hard},
  author={Peter W. Shor},
  booktitle={Applied Geometry And Discrete Mathematics},
  year={1990},
  url={https://api.semanticscholar.org/CorpusID:43637487}
}

@misc{ExistsRSurvery,
      title={The Existential Theory of the Reals as a Complexity Class: A Compendium}, 
      author={Marcus Schaefer and Jean Cardinal and Tillmann Miltzow},
      year={2024},
      eprint={2407.18006},
      archivePrefix={arXiv},
      primaryClass={cs.CC},
      url={https://arxiv.org/abs/2407.18006}, 
}

@INPROCEEDINGS{PackingETRHard,
  author={Abrahamsen, Mikkel and Miltzow, Tillmann and Seiferth, Nadja},
  booktitle={2020 IEEE 61st Annual Symposium on Foundations of Computer Science (FOCS)}, 
  title={Framework for $\exists\mathbb{R}$-Completeness of Two-Dimensional Packing Problems}, 
  year={2020},
  volume={},
  number={},
  pages={1014-1021},
  keywords={Containers;Fabrics;Clothing;Shape;Operations research;Metals;Industries;Existential Theory of the Reals;Geometric Packing},
  doi={10.1109/FOCS46700.2020.00098}
}

@book{ArtGalleryTextbook,
  title={Art gallery theorems and algorithms},
  author={O'Rourke, Joseph},
  volume={57},
  year={1987},
  publisher={Oxford New York, NY, USA}
}

@InProceedings{IrrationalGuards,
  author =	{Abrahamsen, Mikkel and Adamaszek, Anna and Miltzow, Tillmann},
  title =	{Irrational Guards are Sometimes Needed},
  booktitle =	{33rd International Symposium on Computational Geometry (SoCG 2017)},
  pages =	{3:1--3:15},
  series =	{Leibniz International Proceedings in Informatics (LIPIcs)},
  ISBN =	{978-3-95977-038-5},
  ISSN =	{1868-8969},
  year =	{2017},
  volume =	{77},
  editor =	{Aronov, Boris and Katz, Matthew J.},
  publisher =	{Schloss Dagstuhl -- Leibniz-Zentrum f{\"u}r Informatik},
  address =	{Dagstuhl, Germany},
  URL =		{https://drops.dagstuhl.de/entities/document/10.4230/LIPIcs.SoCG.2017.3},
  URN =		{urn:nbn:de:0030-drops-71946},
  doi =		{10.4230/LIPIcs.SoCG.2017.3},
  annote =	{Keywords: art gallery problem, computational geometry, irrational numbers}
}

@InProceedings{XYDefinition,
  author =  {Akanksha Agrawal and Kristine V. K. Knudsen and Daniel Lokshtanov and Saket Saurabh and Meirav Zehavi},
  title = {The Parameterized Complexity of Guarding Almost Convex Polygons},
  booktitle = {36th International Symposium on Computational Geometry (SoCG 2020)},
  pages = {3:1--3:16},
  series =  {Leibniz International Proceedings in Informatics (LIPIcs)},
  ISBN =  {978-3-95977-143-6},
  ISSN =  {1868-8969},
  year =  {2020},
  volume =  {164},
  editor =  {Sergio Cabello and Danny Z. Chen},
  publisher = {Schloss Dagstuhl--Leibniz-Zentrum f{\"u}r Informatik},
  address = {Dagstuhl, Germany},
  URL =   {https://drops.dagstuhl.de/opus/volltexte/2020/12161},
  URN =   {urn:nbn:de:0030-drops-121614},
  doi =   {10.4230/LIPIcs.SoCG.2020.3},
  annote =  {Keywords: Art Gallery, Reflex vertices, Monotone 2-CSP, Parameterized Complexity, Fixed Parameter Tractability}
}

@article{NPHardness,
  author    = {D. T. Lee and
               Arthur K. Lin},
  title     = {Computational complexity of art gallery problems},
  journal   = {{IEEE} Transactions on Information Theory},
  volume    = {32},
  number    = {2},
  pages     = {276--282},
  year      = {1986},
  url       = {https://doi.org/10.1109/TIT.1986.1057165},
  doi       = {10.1109/TIT.1986.1057165},
  timestamp = {Tue, 10 Mar 2020 10:49:44 +0100},
  biburl    = {https://dblp.org/rec/journals/tit/LeeL86.bib},
  bibsource = {dblp computer science bibliography, https://dblp.org}
}

@article{ExistsRHardness, 
    title={The art gallery problem is $\exists\rr$-complete},
    volume={69},
    DOI={10.1145/3486220},
    number={1}, 
    journal={Journal of the ACM}, 
    author={Abrahamsen, Mikkel and Adamaszek, Anna and Miltzow, Tillmann},
    year={2021}, 
    pages={1–70}
}

@InProceedings{Stade2025,
  author =	{Stade, Jack},
  title =	{The Point-Boundary Art Gallery Problem Is $\exists\mathbb{R}$-Hard},
  booktitle =	{41st International Symposium on Computational Geometry (SoCG 2025)},
  pages =	{74:1--74:23},
  series =	{Leibniz International Proceedings in Informatics (LIPIcs)},
  ISBN =	{978-3-95977-370-6},
  ISSN =	{1868-8969},
  year =	{2025},
  volume =	{332},
  editor =	{Aichholzer, Oswin and Wang, Haitao},
  publisher =	{Schloss Dagstuhl -- Leibniz-Zentrum f{\"u}r Informatik},
  address =	{Dagstuhl, Germany},
  URL =		{https://drops.dagstuhl.de/entities/document/10.4230/LIPIcs.SoCG.2025.74},
  URN =		{urn:nbn:de:0030-drops-232269},
  doi =		{10.4230/LIPIcs.SoCG.2025.74},
  annote =	{Keywords: Art Gallery Problem, Complexity, ETR, Polygon}
}

@article{GraphRegex,
  title={Signal Flow Graph Techniques for Sequential Circuit State Diagrams},
  author={Janusz A. Brzozowski and Edward Joseph McCluskey},
  journal={IEEE Trans. Electron. Comput.},
  year={1963},
  volume={12},
  pages={67-76},
  url={https://api.semanticscholar.org/CorpusID:708459}
}

@article{STLowerBound,
    author = {Karchmer, Mauricio and Wigderson, Avi},
    title = {Monotone Circuits for Connectivity Require Super-Logarithmic Depth},
    journal = {SIAM Journal on Discrete Mathematics},
    volume = {3},
    number = {2},
    pages = {255-265},
    year = {1990},
    doi = {10.1137/0403021},
    URL = {https://doi.org/10.1137/0403021},
    eprint = {https://doi.org/10.1137/0403021},
}

@inproceedings{IrrationalArithmetic, 
    author = {Mishra, B. and Pedersen, P.}, 
    title = {Arithmetic with real algebraic numbers is in NC}, 
    year = {1990}, 
    isbn = {0201548925}, 
    publisher = {Association for Computing Machinery}, 
    address = {New York, NY, USA}, 
    url = {https://doi.org/10.1145/96877.96909}, 
    doi = {10.1145/96877.96909}, 
    booktitle = {Proceedings of the International Symposium on Symbolic and Algebraic Computation}, 
    pages = {120–126}, 
    numpages = {7}, 
    location = {Tokyo, Japan}, 
    series = {ISSAC '90} 
}

@book{DSBook, 
    place={Cambridge etc.}, 
    title={Davenport-Schinzel sequences and their geometric applications}, 
    publisher={Cambridge University Press}, 
    author={Sharir, Micha and Agarwal, Pankaj K.}, 
    year={2010},
}

@InProceedings{ContiguousArtGallery,
  author =	{Biniaz, Ahmad and Maheshwari, Anil and Merrild, Magnus Christian Ring and Mitchell, Joseph S. B. and Odak, Saeed and Polishchuk, Valentin and Robson, Eliot W. and Rysgaard, Casper Moldrup and Schou, Jens Kristian Refsgaard and Shermer, Thomas and Spalding-Jamieson, Jack and Svenning, Rolf and Zheng, Da Wei},
  title =	{{Polynomial-Time Algorithms for Contiguous Art Gallery and Related Problems}},
  booktitle =	{41st International Symposium on Computational Geometry (SoCG 2025)},
  pages =	{20:1--20:21},
  series =	{Leibniz International Proceedings in Informatics (LIPIcs)},
  ISBN =	{978-3-95977-370-6},
  ISSN =	{1868-8969},
  year =	{2025},
  volume =	{332},
  editor =	{Aichholzer, Oswin and Wang, Haitao},
  publisher =	{Schloss Dagstuhl -- Leibniz-Zentrum f{\"u}r Informatik},
  address =	{Dagstuhl, Germany},
  URL =		{https://drops.dagstuhl.de/entities/document/10.4230/LIPIcs.SoCG.2025.20},
  URN =		{urn:nbn:de:0030-drops-231720},
  doi =		{10.4230/LIPIcs.SoCG.2025.20},
  annote =	{Keywords: Art Gallery Problem, Computational Geometry, Combinatorics, Discrete Algorithms}
}

@inproceedings{NNERHard,
    author = {Abrahamsen, Mikkel and Kleist, Linda and Miltzow, Tillmann},
    booktitle = {Advances in Neural Information Processing Systems},
    editor = {M. Ranzato and A. Beygelzimer and Y. Dauphin and P.S. Liang and J. Wortman Vaughan},
    pages = {18293--18306},
    publisher = {Curran Associates, Inc.},
    title = {Training Neural Networks is $\exists\mathbb{R}$-complete},
    url = {https://proceedings.neurips.cc/paper_files/paper/2021/file/9813b270ed0288e7c0388f0fd4ec68f5-Paper.pdf},
    volume = {34},
    year = {2021}
}

@inproceedings{ConnectedNNERHard, 
    author = {Bertschinger, Daniel and Hertrich, Christoph and Jungeblut, Paul and Miltzow, Tillmann and Weber, Simon}, 
    title = {Training fully connected neural networks is $\exists\mathbb{R}$-complete}, 
    year = {2023}, 
    publisher = {Curran Associates Inc.}, 
    address = {Red Hook, NY, USA}, 
    booktitle = {Proceedings of the 37th International Conference on Neural Information Processing Systems}, 
    articleno = {1571}, 
    numpages = {16}, 
    location = {New Orleans, LA, USA}, 
    series = {NIPS '23}
}

@article{BoundaryNPHard,
  author = {Laurentini, Aldo},
  year = {1999},
  month = {10},
  pages = {265-278},
  title = {Guarding the walls of an art gallery},
  volume = {15},
  journal = {The Visual Computer},
  doi = {10.1007/s003710050177}
}

@article{ETRBounding,
    author = {Schaefer, Marcus and Stefankovič, Danie},
    year = {2017},
    month = {02},
    pages = {},
    title = {Fixed Points, {Nash} Equilibria, and the Existential Theory of the Reals},
    volume = {60},
    journal = {Theory of Computing Systems},
    doi = {10.1007/s00224-015-9662-0}
}

\end{document}